\def\Q{\mathbf{Q}}
\def\P{\mathbf{P}}
\def\I{\mathbf{I}}
\def\zhat{\mathbf{z}}
\def\xhat{\mathbf{x}}
\def\yhat{\mathbf{y}}
\def\V{\mathbf{V}}
\def\_v{\mathbf{v}}
\def\S{\mathbf{S}}
\crefname{hypothesis}{Hypothesis}{Hypotheses}
\newcommand{\TheTitle}{Nematic Liquid Crystals in Cuboids}
\newcommand{\TheAuthors}{B. Shi, Y. Han, A. Majumdar, L. Zhang}
\headers{\TheTitle}{\TheAuthors}
\title{{\TheTitle}\thanks{This work was partially supported by National Natural Science Foundation of China (Grant No. 12225102, T2321001, 12050002, 12288101), Newton Advanced Fellowship, Leverhulme Research Project Grant RPG-2021-401, EPSRC Grant Number EP/R014604/1, the Humboldt Foundation, and a University of Strathclyde New Professors Fund.}}
\author{Baoming Shi\thanks{School of Mathematical Sciences, Peking University, Beijing 100871, China (\email{ming123@stu.pku.edu.cn}).}
\and  Yucen Han\thanks{Department of Mathematics and Statistics, University of Strathclyde, G1 1XQ, UK (\email{yucen.han@strath.ac.uk}).}
\and  Apala Majumdar\thanks{Department of Mathematics and Statistics, University of Strathclyde, G1 1XQ, UK (\email{apala.majumdar@strath.ac.uk}).}
\and Lei Zhang\thanks{Beijing International Center for Mathematical Research, Center for Quantitative Biology, Center for Machine Learning Research, Peking University, Beijing 100871, China (\email{zhangl@math.pku.edu.cn}).}
}
\renewcommand{\TheTitle}{Multistability for Nematic Liquid Crystals in Cuboids with Degenerate Planar Boundary Conditions}
\begin{document}
\maketitle

\begin{abstract}
 We study nematic configurations within three-dimensional (3D) cuboids, with planar degenerate boundary conditions on the cuboid faces, in the Landau-de Gennes framework. There are two geometry-dependent variables: the edge length of the square cross-section, $\lambda$, and the parameter $h$, which is a measure of the cuboid height. Theoretically, we prove the existence and uniqueness of the global minimiser with a small enough cuboid size. We develop a new numerical scheme for the high-index saddle dynamics to deal with the surface energies. We report on a plethora of (meta)stable states, and their dependence on $h$ and $\lambda$, and in particular, how the 3D states are connected with their two-dimensional counterparts on squares and rectangles. Notably, we find families of almost uniaxial stable states constructed from the topological classification of tangent unit-vector fields and study transition pathways between them. We also provide a phase diagram of competing (meta)stable states, as a function of $\lambda$ and $h$.
 \end{abstract}

\begin{keywords}
Landau–de Gennes model, nematic liquid crystals, multistability, solution landscape, critical point, saddle dynamics
\end{keywords}


\section{Introduction}
\label{sec:intro}
Liquid crystals (LCs) are mesophases, that are intermediate in character between the solid and liquid phases of matter \cite{de1993physics,wang2021modeling}. There are different types of LCs, of which nematic liquid crystals (NLCs) are the simplest and most commonly used in science and technology. NLCs combine fluidity with the directionality of solids i.e. NLCs have long-range orientational order with distinguished directions of preferred molecular alignment, referred to as nematic ‘‘directors’’ in the literature \cite{de1993physics}. The intrinsic anisotropy makes NLCs highly sensitive to external stimuli e.g. electric fields, incident light, temperature, stress and surface effects. Indeed, the exceptional properties of NLCs make them the working material of choice for the multi-billion dollar liquid crystal display (LCD) industry, and NLC applications now extend to soft robotics, biomimetic materials, sensors and light modulators \cite{LAGERWALL20121387,bisoyi2021liquid, edwards2020interaction,loussert2013manipulating}. 

NLC applications can depend quite strongly on anchoring conditions or boundary conditions, i.e. the coupling of the NLC molecules to surfaces can determine the nematic director profiles on the surfaces \cite{de1993physics}. The anchoring conditions are typically either planar degenerate/tangential, for which the director is tangent to the surface or in the plane of the surface, or homeotropic/normal for which the director is orthogonal to the surface. For example, in \cite{tsakonas2007multistable}, the authors report an NLC-filled 3D array of square or rectangular wells, such that the well surfaces are treated to induce tangent boundary conditions. The tangent boundary conditions induce bistability, i.e. the wells can support two optically contrasting stable NLC states, without any external fields. 
In \cite{noh2020PRR}, the authors study NLC shells and the shell surfaces are treated with a polymer, such that the boundary conditions can be dynamically tuned from tangential to normal, as the shells undergo a heating transition. The change in the boundary conditions manifests in the experimentally recorded optical images. In \cite{caimi2021surface}, the authors explore the surface alignment of the ferroelectric nematic phase
by testing different rubbed and unrubbed substrates that differ in coupling strength and anchoring
orientation and find a variety of behaviours – in terms of nematic orientation, topological defects and
electric field response. In \cite{Chung_2003}, the authors study NLC molecular orientations on a doubly treated substrate, with different surface alignments on the top and bottom surfaces, and use the simulation results to estimate surface anchoring strengths.

In previous work \cite{shi2023hierarchies}, we study NLCs confined to a 3D cuboid with Dirichlet/fixed tangent boundary conditions on the lateral surfaces of the cuboid, and with either Neumann (natural) boundary conditions or fixed Dirichlet boundary conditions on the top and bottom cuboid surfaces. In both cases, we work within the celebrated continuum Landau-de Gennes (LdG) framework and model physically relevant configurations as minimisers of an appropriately defined LdG energy, or stable solutions of the associated Euler-Lagrange equations which are a system of nonlinear and coupled partial differential equations \cite{majumdar2010equilibrium}. 
With Neumann boundary conditions on the top and bottom cuboid surfaces, we find 3D $z$-invariant solutions, mixed solutions (also reported in \cite{canevari_majumdar_wang_harris}), and multi-layer solutions with mixed solutions stacked on the top of each other \cite{shi2023hierarchies}. In \cite{han2022prism}, we fix Dirichlet conditions on the top and bottom surfaces in terms of appropriately defined stable solutions of the LdG Euler-Lagrange equations on square/rectangular domains, often referred to as reduced LdG (rLdG) solutions in two-dimensional (2D) settings \cite{han2020reduced}. This choice of Dirichlet conditions is special, and we use our wealth of knowledge of pathways between stable rLdG solutions on squares/rectangles, to construct 3D solutions on a cuboid. Namely, in some cases, if $A \to B \to C$ is a pathway on the rLdG solution landscape on a square domain, where $A, B, C$ denote rLdG solutions (critical points of an appropriately defined rLdG energy), then we can construct a 3D critical point on a cuboid with fixed Dirichlet conditions, corresponding to the $A$, $C$ solutions on the top and bottom surfaces, with a $B$-profile located at the middle of the cuboid. In other words, we can construct 3D critical points of the LdG energy by stacking rLdG solutions on square domains, on top of each other. In this paper, we build on our previous work and relax the fixed Dirichlet tangent conditions on the lateral surfaces of the cuboid. Rather, we work with surface energies on all six surfaces, that enforce planar degenerate or tangent anchoring on all cuboid faces. 
In this case, the nematic director is only coerced to be in the plane of the face, without a fixed direction in contrast to the Dirichlet fixed boundary conditions on the lateral surfaces in \cite{han2022prism,shi2022nematic}. This certainly allows for more freedom on the lateral faces, and expands the corresponding solution landscapes. 

In a batch of papers \cite{han2020reduced,kralj2014order, robinson2017molecular,yin2020construction, fang2020surface}, the authors study the rLdG model on 2D square and rectangular domains. The rLdG model can be viewed as a restriction of the LdG model to 2D domains with tangent boundary conditions; more details are given in the next section. A rectangular domain is characterized by an edge length, $\lambda$, and an aspect ratio, $b$. For $b=1$, it is known that for $\lambda$ small enough, the Well Order Reconstruction Solution (WORS) is the unique stable rLdG solution on a square domain with tangent boundary conditions.  The WORS is distinguished by two orthogonal defect lines along the two square diagonals, and as $\lambda$ increases, the WORS bifurcates to BD solutions with parallel line defects along opposite edges, and then the stable diagonal (D) and rotated (R) solutions (first reported in \cite{tsakonas2007multistable}). If $b \neq 1$, the competing stable solutions are the BD, D and R solutions. 

In this paper, we have a 3D cuboid with two geometry-dependent variables: the edge length of the square cross-section denoted by $\lambda$, and the cuboid height or the aspect ratio denoted by $h$. The anchoring strength, $W$, is yet another parameter, a measure of how strongly the tangent boundary conditions are enforced on the cuboid faces. We prove that LdG energy minimisers strictly respect tangent boundary conditions in the $W\to\infty$ limit, and we work with a large $W$ throughout the manuscript. We design a numerical scheme to deal with the ill-conditioning posed by the surface energies in some regimes. We then, numerically investigate the LdG solution landscape on cuboids for different geometrical regimes, defined by $\lambda$ and $h$, i.e. when the $z$-edge is longer or shorter than the $x$,$y$-edges and when we have a cube of all equal edge lengths. Notably, we generate exotic stable solutions in the LdG framework, which exhibit rLdG solutions on the six cuboid faces e.g. WORS-WORS-WORS solution with a WORS-type profile on all six faces and exotic defect structures on the surfaces and in the cuboid interior, BD-BD-BD type solutions with BD-type profiles on all six faces, WORS-BD-BD profiles with a mix of WORS and BD-profiles on the six faces, D-BD-BD or R-BD-BD solutions with a mix of D, R and BD-type profiles on the cuboid faces. Last but not the least, we use the topological methods from \cite{robbins2004classification} to numerically find at least six different stable solutions, with D and R-type profiles on all cuboid faces, which we believe to be globally stable for large $\lambda$ and large $h$.  
We also study transition pathways between competing stable solutions, and find multiple possibilities, all dictated by the vertex defects moving along either cuboid edges or face diagonals. We summarise our findings in a phase diagram, as a function of $h$ and $\lambda$, which summarises our numerical results.

In Sec. 2, we briefly review the LdG theory and illustrate our setup. In Sec. 3, we introduce the high-index saddle dynamics and illustrate the new numerical scheme to accelerate and stabilize the numerical computations with surface energies. In Sec. 4, we present numerical results for small $h$, large $h$ and $h = 1$. We finally present our conclusions and discussions in Sec. 5.

\section{The Landau--de Gennes theory}

We work within the celebrated LdG theory,  which is the most general continuum theory for NLCs. The LdG theory describes the state of NLC ordering by a macroscopic order parameter, the $\Q$-tensor, that distinguishes NLCs from isotropic liquids \cite{de1993physics}. 
Mathematically, the $\Q$-tensor is a symmetric and traceless $3\times3$ matrix as shown below:
\begin{equation}\label{eq:5-degree}
    \Q = \begin{pmatrix}
        Q_1 & Q_3 & Q_4\\
        Q_3 & Q_2 &  Q_5\\
        Q_4 & Q_5 &  -Q_1-Q_2
    \end{pmatrix}.
\end{equation}
 From the spectral decomposition theorem, we can write the $\Q$-tensor as
$$
\Q = \sum_{i=1}^{3} \lambda_{i} \mathbf{e}_i \otimes \mathbf{e}_i,
$$
where $\left\{ \mathbf{e}_1, \mathbf{e}_2, \mathbf{e}_3 \right\}$ are the eigenvectors of the $\Q$-tensor and $\lambda_1\leqslant\lambda_2\leqslant\lambda_3$ are the associated eigenvalues respectively, subject to $\sum_{i=1}^{3}\lambda_i = 0$. The eigenvectors model the preferred directions of spatially averaged local molecular alignment or \emph{the nematic directors}, and the eigenvalues are a measure of the orientational order about these directions. A $\Q$-tensor is said to be isotropic if $\Q=\mathbf{0}$, uniaxial if $\Q$ has a pair of repeated non-zero eigenvalues, and biaxial if $\Q$ has three distinct eigenvalues \cite{de1993physics,mottram2014introduction}. A uniaxial NLC phase has a single distinguished direction of averaged molecular alignment (modelled by the eigenvector with the non-degenerate eigenvalue), such that all directions perpendicular to the uniaxial director are physically equivalent. A biaxial phase has a primary and secondary nematic director. 

The LdG theory is a variational theory, and the physically observable configurations are modelled by minimisers of an appropriately defined LdG free energy \cite{de1993physics}. We work with a simple form of the LdG free energy:
\begin{equation} \label{eq:LdG}
	E_{LdG}[\Q]: = \int_{V} \left[\frac{L}{2}\left| \nabla \Q \right|^2 + f_B\left( \Q \right)\right]\mathrm{d}V, 
\end{equation}
where the first term in the integrand is the elastic energy density that penalizes spatial inhomogeneities, $\left| \nabla \Q \right|^2:=\frac{\partial Q_{ij}}{\partial r_k}\frac{\partial Q_{ij}}{\partial r_k}$, $i,j,k = 1,2,3$ and the second term is the thermotropic potential that dictates the preferred NLC phase as a function of temperature, 
\begin{equation}\label{f_B}
	f_B(\Q): = \frac{A}{2}\mathrm{tr} \Q^2 - \frac{B}{3} \mathrm{tr} \Q^3 + \frac{C}{4} (\mathrm{tr} \Q^2)^2-f_{B,0},
\end{equation}
where $\mathrm{tr} \Q = Q_{ij} Q_{ij}$, $i,j = 1,2,3$ and we use the Einstein summation convention throughout the paper.

Here, $L>0$ is a material-dependent elastic constant, $A=\alpha (T - T^*)$ is the rescaled temperature, with $\alpha>0$ and $T^*$ is a characteristic liquid crystal temperature; $B, C>0$ are material-dependent bulk constants. The minimisers of $f_B$ depend on $A$ and determine the NLC phase for spatially homogeneous samples. For $A<0$, the minimisers of $f_B$ constitute a continuum of uniaxial $\Q$-tensors defined by
\[
\mathcal{N} = \left\{ \Q = s_+\left(\mathbf{n}\otimes \mathbf{n} - \frac{\mathbf{I}}{3} \right) \right\},
\]
where
$$ s_+ = \frac{B + \sqrt{B^2 - 24 AC}}{4C}, $$
and
$\mathbf{n}$ is an arbitrary unit vector field that models the uniaxial director. The constant, $f_{B,0}=\frac{A}{3}s_+^2-\frac{2B}{27}s_+^3+\frac{C}{9}s_+^4$ \cite{majumdar2010equilibrium}, is added to ensure a non-negative energy density.

Our working domain is a cuboid $V=\left[ -\lambda,\lambda\right]^2 \times \left[-\lambda h,\lambda h \right]$ where $\lambda$ is edge-length of the 2D square cross-section and $h>0$ is a measure of the height. In a batch of papers \cite{han2020reduced, han2021elastic}, the authors work with the rLdG model valid for 2D domains or the thin-film limit of 3D domains (the $h\to 0$ limit of the cuboid $V$ above) and certain boundary conditions, for which the energy-minimising or physically relevant $\Q$-tensors have constant $\frac{Q_1 + Q_2}{2}$, and $Q_4 = Q_5 = 0$, so that the rLdG order parameter has only two degrees of freedom, $\frac{Q_1-Q_2}{2}$ and $Q_3$ respectively. The rLdG order parameter is $z$-invariant and only describes in-plane nematic ordering in a 2D domain, which cannot work for truly 3D scenarios. However, at a special characteristic low-temperature, $A = -B^2/ 3C$, there exists a branch of critical points of \eqref{eq:LdG} of the form
\[
\Q = \P - \frac{B}{3C}(2 \zhat \otimes \zhat - \xhat \otimes \xhat - \yhat \otimes \yhat )
\] where $\P$ is the rLdG order parameter with two degrees of freedom; $\xhat$, $\yhat$, $\zhat$ are the unit-vectors in the co-ordinate directions respectively and $\P$ is a critical point of an appropriately defined rLdG free energy, and this branch exists for all $h>0$. 
In this paper, we use the same parameter values as in \cite{han2020reduced, han2021elastic} to facilitate comparisons between the 2D solution landscapes and their relevance for 3D problems, i.e. we fix $B=0.64\times10^4 \text{Nm}^{-2}$,  $C=0.35\times10^4 \text{Nm}^{-2}$, and $L=4\times10^{-11} \text{N}$ (material constants for the representative NLC material MBBA) \cite{majumdar2010equilibrium, wojtowicz1975introduction} and work at the special temperature $A=-B^2/3C$,  as in \cite{shi2023hierarchies,robinson2017molecular,yin2020construction}.

By rescaling the system according to $(\bar{x},\bar{y},\bar{z}) = (\frac{x}{\lambda},\frac{y}{\lambda},\frac{z}{\lambda})$, $\bar{\lambda}^2=\frac{2C\lambda^2}{L}$ and dropping the bars in subsequent discussions (so that all results are in terms of dimensionless variables), the non-dimensionalised LdG free energy is given by,
\begin{equation}
	E_{LdG}[\Q]:=\int_{V} \left[\dfrac{1}{2}\left|\nabla \Q \right|^2 +\lambda^2 \left(\frac{A}{4C}\text{tr}\Q^2-\frac{B}{6C}\text{tr}\Q^3+\frac{1}{8}(\text{tr}\Q^2)^2-\frac{f_{B,0}}{2C}
	\right) \right]\mathrm{d}V.
	\label{energy_b}
\end{equation}
The normalized domain is $V=\Omega\times \left[-h,h \right]$, $\Omega=\left[ -1,1\right]^2$ is the two-dimensional cross-section of the cuboid, and $\lambda^2$ describes the cross-sectional size. We use the biaxiality parameter, 
\begin{equation}
  \beta^2=1-6\text{tr}(\Q^3)^2/(\text{tr}(\Q^2))^3,
  \label{eq: biaxiality parameter}
\end{equation}
to visualize defects, since defects are typically surrounded by regions with high biaxiality \cite{PhysRevE.60.1858,Canevari}. We have $0\leqslant \beta^2 \leqslant 1$, and $\beta^2=0$ if and only if $\Q$ is uniaxial or isotropic \cite{landau2010beta}.

We impose surface energies on all six cuboid faces: $\partial V_{tb},\partial V_{fb},\partial V_{lr}$, which implicitly enforce planar degenerate anchoring on all six faces, i.e. coerce the leading eigenvector (with the largest positive eigenvalue) of the $\Q$-tensor to be in the plane of the face. The total free energy is then given by,

\begin{equation}
    E_{\omega}[\Q]:=E_{LdG}(\Q)+\omega E_{bc}(\Q)=\int_{V} \left[\dfrac{1}{2}\left|\nabla \Q \right|^2 +\lambda^2 f_b\right]\mathrm{d}V+\omega \int_{\partial V} \Vert\Q\Vec{\nu}+\frac{s_+}{3}\Vec{\nu}\Vert^2 \mathrm{d}S,
  \label{eq: energy total}
  \end{equation}
where $\Vec{\nu}$ is the outer normal vector of $\partial V$, $\omega=\frac{\lambda W}{\sqrt{2CL}}$ is the non-dimensionalised anchoring strength, and $W$ is the surface anchoring parameter, which is a measure of how strongly the boundary conditions are enforced on $\partial V$ \cite{ravnik2009landau}. If $W = 0$, then we have natural boundary conditions on all six surfaces. The surface energy, $E_{bc}$, vanishes if and only if  $\Vec{\nu}$ is an eigenvector with constant eigenvalue $-s_+/3$, so that the leading eigenvector is in the plane of the cuboid face, but our choice of $E_{bc}$ does not fix or prefer a specific orientation/direction for the leading planar eigenvector, i.e. the leading eigenvector is free to rotate in the plane of the cuboid faces. In fact, $E_{bc} = 0$ precisely when $\Q$ only has two degrees of freedom on the faces of $V$, as in the rLdG model, and hence, it is reasonable to expect that the surface profiles of the minimisers of \eqref{eq: energy total} are critical points/minimisers of the rLdG free energy in \cite{han2020reduced}. 

There are discontinuities of the outer normal vector on the edges. For the following analytic results, we take the domain $V$ to be a ``smoothed cuboid'' (as shown in Fig. \ref{figure 1}) with a small fixed truncation $\epsilon$. For $\epsilon$ small enough, this assumption will not affect our qualitative predictions. The admissible $\Q$-tensors belong to the space 
\begin{equation}
W^{1,2}_{V,\S_0}=\{\Q \in \S_0|\Q \in W_V^{1,2}\},
\end{equation}
where 
\begin{equation}
\begin{aligned}
\S_0:=\{\Q \in \mathbb{R}^{3\times 3}: \Q_{ij}=\Q_{ji},\Q_{ii}=0\},\\
W_V^{1,2}=\left\{u:  \int_V \left(|u|^2+|\nabla u|^2 \right)\mathrm {d}A<\infty \right\}.
\end{aligned}
\end{equation}
In the ensuing propositions and lemmas, we prove some generic existence, uniqueness and maximum-principle type of results for critical points of \eqref{eq: energy total} on the smoothed cuboid.

\begin{figure}[H]
  \centering
  \includegraphics[width=.3\textwidth]{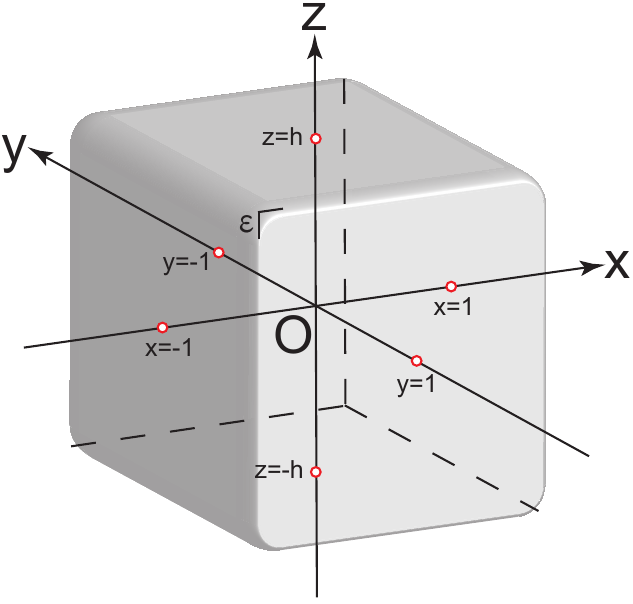}
  \caption{The smoothed cuboid $V$ with a small fixed truncation $\epsilon$.}
  \label{figure 1}
\end{figure}

\begin{lemma} (Poincare-Friedrichs inequality) \label{lemma: Poincare}
  For any $\partial V_0\subseteq \partial V$ with a positive 2D Lebesgue measure, there exist two constants $\gamma_1\geqslant \gamma_0>0$ depending on $h$ subject to,
  \begin{equation}\label{PF}
    \gamma_0(h)\Vert u\Vert_{W_V^{1,2}}^2\leqslant \Vert u \Vert_{L_{\partial V_0}^2}^2+\Vert\nabla u \Vert_{L_V^2}^2\leqslant \gamma_1(h)\Vert u \Vert_{W_V^{1,2}}^2, \forall u\in W_V^{1,2}.
  \end{equation}
\end{lemma}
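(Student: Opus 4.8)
The plan is to prove the two inequalities separately; the right-hand (continuity) bound is routine, while the left-hand (coercivity) bound is the heart of the matter and will require a compactness argument.

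For the upper bound, I would invoke the trace theorem on the smoothed cuboid $V$, which is a bounded smooth (in particular Lipschitz) domain. The trace operator $W_V^{1,2} \to L^2(\partial V)$ is bounded, so $\|u\|_{L^2_{\partial V_0}}^2 \le \|u\|_{L^2_{\partial V}}^2 \le C_{\mathrm{tr}}(h)\,\|u\|_{W_V^{1,2}}^2$, and trivially $\|\nabla u\|_{L^2_V}^2 \le \|u\|_{W_V^{1,2}}^2$. Adding these gives the right inequality with $\gamma_1(h)=1+C_{\mathrm{tr}}(h)$; the dependence on $h$ enters only through the geometry of $V=\Omega\times[-h,h]$.

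The substantive part is the lower bound. Since $\|u\|_{W_V^{1,2}}^2=\|u\|_{L^2_V}^2+\|\nabla u\|_{L^2_V}^2$, it suffices to establish the generalized Poincar\'e estimate
$$\|u\|_{L^2_V}^2 \le C(h)\left(\|u\|_{L^2_{\partial V_0}}^2+\|\nabla u\|_{L^2_V}^2\right),$$
from which the claim follows with $\gamma_0(h)=1/(C(h)+1)$. I would argue by contradiction: if no such $C(h)$ exists, there is a sequence $(u_n)\subset W_V^{1,2}$ with $\|u_n\|_{L^2_V}=1$ and $\|u_n\|_{L^2_{\partial V_0}}^2+\|\nabla u_n\|_{L^2_V}^2\to 0$. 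This sequence is bounded in $W_V^{1,2}$, so after extracting a subsequence, the Rellich--Kondrachov theorem gives $u_n\to u$ strongly in $L^2(V)$ with $u_n\rightharpoonup u$ weakly in $W_V^{1,2}$. Weak lower semicontinuity of the gradient norm combined with $\|\nabla u_n\|_{L^2_V}\to 0$ forces $\nabla u=0$, so $u$ is constant on the connected domain $V$; strong $L^2$ convergence preserves the normalization, whence $u\equiv c\ne 0$.

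The main obstacle, and the step that actually closes the argument, is controlling the boundary term in the limit, since weak $W^{1,2}$ convergence alone does not control traces. Here I would use that the trace map $W_V^{1,2}\to L^2(\partial V)$ is compact --- it factors through the bounded map into $H^{1/2}(\partial V)$ followed by the compact embedding $H^{1/2}(\partial V)\hookrightarrow L^2(\partial V)$. Hence $u_n|_{\partial V_0}\to u|_{\partial V_0}$ strongly in $L^2(\partial V_0)$, and $\|u_n\|_{L^2_{\partial V_0}}\to 0$ gives $u|_{\partial V_0}=0$. But the trace of the constant $c$ is $c$, so $\|u\|_{L^2_{\partial V_0}}^2=c^2|\partial V_0|>0$ because $\partial V_0$ has positive 2D Lebesgue measure --- a contradiction. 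I would finally remark that this contradiction argument supplies a constant for each fixed admissible $\partial V_0$ (the constant may degenerate as $|\partial V_0|\to 0$), which is all that is needed since the relevant surfaces are fixed.
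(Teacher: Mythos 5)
Your proof is correct and follows essentially the same route as the paper: both arguments are by contradiction, normalize the sequence, use Rellich--Kondrachov to extract a strongly $L^2$-convergent subsequence, identify the limit as a nonzero constant via the vanishing gradients, and then use the vanishing boundary term on $\partial V_0$ (of positive measure) to force that constant to be zero. The only cosmetic difference is that you control the traces via compactness of the trace operator, whereas the paper first upgrades to strong $W_V^{1,2}$ convergence (the sequence is Cauchy there) and then applies continuity of the trace; both are standard and equivalent here.
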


\begin{proof}
The proof follows the same paradigm as in A.9 Theorem in \cite{struwe2000variational}. A brief proof is provided for completeness. The second inequality can be directly derived from the embedding theorem \cite{adams2003sobolev}. We prove the first inequality in \eqref{PF} by contradiction.

    Assuming that the first inequality does not hold with any positive $\gamma_0$, i.e. there exists a sequence $\{u_k\} \in W^{1,2}_V$ which satisfies $\Vert u_k\Vert_{W_V^{1,2}}^2=1$ and,
    \begin{equation}
        \Vert u_k \Vert_{L_{\partial V_0}^2}^2+\Vert\nabla u_k \Vert_{L_V^2}^2\rightarrow 0, k\rightarrow \infty.
        \label{eq: poincare1}
    \end{equation}
The bounded sequence in $W^{1,2}_V$ is quasi-weakly compact, and from the $W^{1,2}_V$ compact embedding in $L^2_V$ \cite{adams2003sobolev}, there exists a sub-sequence $\{u_{k_j}\}$ and $u_1\in W^{1,2}_V, u_2\in L^2_V$, such that
\begin{equation}
    u_{k_j}\rightharpoonup u_1, \text{in } W^{1,2}_V, \text{ and }  u_{k_j}\rightarrow u_2, \text{in } L^2_V.
    \label{eq: poincare2}
\end{equation}
From \eqref{eq: poincare1} and \eqref{eq: poincare2}, $\{u_{k_j}\}$ is a Cauchy sequence in $W^{1,2}_V$, and therefore, we have $u_1=u_2=u$ and $\Vert u_{k_j}-u_1 \Vert^2_{W^{1,2}_V}\rightarrow 0$. Together with \eqref{eq: poincare1}, it follows that $u\equiv C_0$ is a constant and \eqref{eq: poincare1} implies that $C_0=0$. However, $\Vert u_k\Vert_{W_V^{1,2}}^2=1$ and hence, $C_0\neq 0$, leading to a contradiction.
\end{proof}

\begin{lemma}\label{lemma: lower-semicontinuous}
    $E_\omega$ is weakly sequentially lower semi-continuous for any $\omega\geqslant0$, i.e.
    \begin{equation}    \liminf_{k\rightarrow \infty}E_{\omega}(\Q_k)\geqslant E_{\omega}(\Q), \Q_k\rightharpoonup \Q \text{ in } W^{1,2}_{V,\S_0}.
    \end{equation}
\end{lemma}
\begin{proof}
    The lower semi-continuity property of $E_{LdG}$ follows directly from the fact the energy density is convex in $\nabla \Q$ \cite{struwe2000variational}.

    From the trace theorem \cite{adams2003sobolev}, there exists a bounded linear operator,
    \begin{equation}
        T: \Q\rightarrow \Q|_{\partial V}, W^{1,2}_{V,\S_0}\rightarrow L^2_{\partial V,\S_0}.
    \end{equation}
We need to show that the surface energy, $E_{bc}(\Q|_{\partial V})=E_{bc}(T(\Q))$, is also weakly sequentially lower semi-continuous on $W^{1,2}_{V,\S_0}$. For any bounded linear operator $F \in (L^{2}_{\partial V,\S_0})^*$ (the dual space of $L^{2}_{\partial V,\S_0}$), $F\circ T$ is a bounded linear operator on $W^{1,2}_{ V,\S_0}$, that is, if $\Q_k \rightharpoonup \Q$, we can have $T(\Q_k) \rightharpoonup T(\Q)$. Further, $E_{bc}(\Q|_{\partial V})$ is weakly sequentially lower semi-continuous on $L^2_{\partial V,\S_0}$, because it is convex and continuously dependent on $\Q|_{\partial V}$ \cite{struwe2000variational}. Thus, if $\Q_k \rightharpoonup \Q$, we can get
\begin{equation}  \liminf_{k\rightarrow \infty}{E_{bc}(T(\Q_k))}\geqslant {E_{bc}(T(\Q))}.
\end{equation}
Finally, the weakly sequentially lower semi-continuity of $E_{\omega}$ follows from
\begin{equation}
\begin{aligned}
\liminf_{k\rightarrow \infty}{E_{\omega}(\Q_k)}&\geqslant \liminf_{k\rightarrow \infty}{E_{LdG}(\Q_k)}+\omega \liminf_{k\rightarrow \infty}{E_{bc}(T(\Q_k))}\\
&\geqslant E_{LdG}(\Q)+\omega E_{bc}(T(\Q))=E_\omega(\Q).  
\end{aligned}
\end{equation}

\end{proof}

\begin{proposition}\label{pro:1}
 For any $\omega>0$, $E_{\omega}$ has a global minimiser $\Q(\omega)$ in $W^{1,2}_{V,\S_0}$ which satisfies 
  \begin{equation}
    \begin{aligned}
    E_{LdG}(\Q(\omega_{n_2}))\geqslant E_{LdG}(\Q(\omega_{n_1}))&, \text{ } E_{bc}(\Q(\omega_{n_2}))\leqslant E_{bc}(\Q(\omega_{n_1})),\text{ if } \omega_{n_2}> \omega_{n_1},\\
   & \lim_{\omega_n\rightarrow \infty} E_{bc}(\Q(\omega_n))=0.
    \end{aligned}
  \end{equation}
  If $\Q(\omega_n)$ has a subsequence such that $\Q(\omega_{n_k})\rightarrow \Q_\infty$ strongly in $W^{1,2}_{V,\S_0}$ as $\omega_n \rightarrow \infty$, then $\Q_\infty \in W^{1,2}_{V,\S_0}$ is a global minimiser of the LdG energy, which satisfies perfect planar degenerate surface anchoring, i.e. $E_{bc}(\Q_\infty)=0$.
\end{proposition}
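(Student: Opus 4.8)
The plan is to combine the direct method (for existence), a pairwise comparison of minimisers (for the monotonicity), and a penalty/$\Gamma$-convergence argument (for the two limits). For existence I would first check that $E_\omega$ is coercive on $W^{1,2}_{V,\S_0}$. Since the constant $f_{B,0}$ makes the bulk density non-negative and the quartic term $\tfrac{\lambda^2}{8}(\mathrm{tr}\,\Q^2)^2$ dominates for large $|\Q|$, one has a pointwise bound $\lambda^2 f_b(\Q)\ge c_1|\Q|^4-c_2$; integrating over the bounded domain $V$ and using Cauchy--Schwarz to pass from $L^4$ to $L^2$ gives $E_\omega(\Q)\ge \tfrac12\|\nabla\Q\|_{L^2_V}^2+c_1'\|\Q\|_{L^2_V}^4-c_2'$, which tends to $\infty$ as $\|\Q\|_{W^{1,2}_V}\to\infty$ (the non-negative surface term only helps). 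Taking a minimising sequence, coercivity gives a $W^{1,2}$-bound, weak compactness a weakly convergent subsequence, and Lemma \ref{lemma: lower-semicontinuous} guarantees that the weak limit $\Q(\omega)$ is a global minimiser.

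For the monotonicity, given $\omega_1<\omega_2$ with minimisers $\Q_1,\Q_2$, I would write the two minimality inequalities $E_{LdG}(\Q_1)+\omega_1 E_{bc}(\Q_1)\le E_{LdG}(\Q_2)+\omega_1 E_{bc}(\Q_2)$ and $E_{LdG}(\Q_2)+\omega_2 E_{bc}(\Q_2)\le E_{LdG}(\Q_1)+\omega_2 E_{bc}(\Q_1)$, add them, and cancel the $E_{LdG}$-terms to get $(\omega_1-\omega_2)\big(E_{bc}(\Q_1)-E_{bc}(\Q_2)\big)\le 0$, hence $E_{bc}(\Q_2)\le E_{bc}(\Q_1)$; substituting this back into the first inequality yields $E_{LdG}(\Q_1)\le E_{LdG}(\Q_2)$. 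This holds for any choice of minimisers, so the two monotonicity statements follow.

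The crux is the vanishing of the surface energy, and here the main obstacle is exhibiting a single admissible competitor $\Q^{*}\in W^{1,2}_{V,\S_0}$ with bounded entries, finite $E_{LdG}(\Q^{*})$, and exact anchoring $E_{bc}(\Q^{*})=0$. Given such a $\Q^{*}$, minimality and $E_{LdG}\ge 0$ give $\omega E_{bc}(\Q(\omega))\le E_{\omega}(\Q(\omega))\le E_{\omega}(\Q^{*})=E_{LdG}(\Q^{*})$, so $E_{bc}(\Q(\omega))\le E_{LdG}(\Q^{*})/\omega\to 0$; with the monotonicity already proved, the decreasing non-negative sequence $E_{bc}(\Q(\omega_n))$ is forced to $0$. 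Constructing $\Q^{*}$ is delicate: $E_{bc}=0$ requires $\vec{\nu}$ to be an eigenvector of $\Q^{*}$ with eigenvalue $-s_+/3$ on $\partial V$, equivalently that the in-plane part of $\Q^{*}$ be a section of the traceless symmetric tangent bundle over $\partial V\cong S^2$. Because this bundle has non-zero Euler number (the hairy-ball obstruction, cf. the topological classification of tangent fields in \cite{robbins2004classification}), no nowhere-vanishing tangent director exists; I would therefore take a tangent uniaxial field $s_+(\n\otimes\n-\tfrac{\I}{3})$ away from finitely many points and let the in-plane order melt to the negative-uniaxial tensor $\tfrac{s_+}{6}\I-\tfrac{s_+}{2}\,\vec{\nu}\otimes\vec{\nu}$ at the isolated defects, where $\vec{\nu}$ remains an eigenvector with eigenvalue $-s_+/3$. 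This produces continuous, hence $H^{1/2}$, boundary data with $E_{bc}=0$, and any bounded $W^{1,2}(V)$ extension has finite $E_{LdG}$; the cuboid smoothing is precisely what lets $\vec{\nu}$ vary continuously so the data genuinely lies in $H^{1/2}(\partial V)$.

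Finally, assuming $\Q(\omega_{n_k})\to\Q_\infty$ strongly in $W^{1,2}_{V,\S_0}$, continuity of the trace operator gives $\Q(\omega_{n_k})|_{\partial V}\to\Q_\infty|_{\partial V}$ in $L^2_{\partial V}$, and since $E_{bc}$ is continuous on $L^2_{\partial V}$, the previous step yields $E_{bc}(\Q_\infty)=\lim_k E_{bc}(\Q(\omega_{n_k}))=0$. The Sobolev embedding $W^{1,2}_V\hookrightarrow L^6_V$ makes $E_{LdG}$ continuous along the strongly convergent sequence, so $E_{LdG}(\Q(\omega_{n_k}))\to E_{LdG}(\Q_\infty)$. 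For any competitor $\Q$ with $E_{bc}(\Q)=0$, minimality gives $E_{LdG}(\Q(\omega_{n_k}))\le E_{\omega_{n_k}}(\Q(\omega_{n_k}))\le E_{\omega_{n_k}}(\Q)=E_{LdG}(\Q)$, and letting $k\to\infty$ shows $E_{LdG}(\Q_\infty)\le E_{LdG}(\Q)$, i.e. $\Q_\infty$ minimises the LdG energy among all configurations satisfying perfect planar degenerate anchoring.
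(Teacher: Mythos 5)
Your proof is correct and its overall architecture matches the paper's: direct method for existence, the same add-and-cancel comparison of the two minimality inequalities for monotonicity, the same penalty bound $\omega_n E_{bc}(\Q(\omega_n))\leqslant E_{LdG}(\hat\Q)$ against a competitor with vanishing surface energy, and trace continuity for the final claim. The genuine differences are two. First, coercivity: you extract it from the bulk potential, using $f_B(\Q)\geqslant c_1|\Q|^4-c_2$ and Jensen/Cauchy--Schwarz to control $\|\Q\|_{L^2_V}$, whereas the paper ignores the bulk term entirely and instead applies its Poincar\'e--Friedrichs lemma (Lemma \ref{lemma: Poincare}) to the pair (elastic energy, surface energy), obtaining $E_\omega(\Q)\geqslant \frac{\min(1,\omega)\gamma_0}{2}\|Q_i\|^2_{W^{1,2}_V}-\min(1,\omega)\bar C$. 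Both are valid; your route is more elementary but uses $\lambda>0$ essentially (the bulk term carries the $\lambda^2$ prefactor and would vanish at $\lambda=0$), while the paper's route is uniform in $\lambda$ and is reused later in the uniqueness proposition. Second, the competitor: the paper simply writes ``take any $\hat\Q$ with $E_{bc}(\hat\Q)=0$'' without justifying that such a $\hat\Q$ exists, whereas you address this head-on, correctly noting that $E_{bc}=0$ only forces $\vec\nu$ to be an eigenvector with eigenvalue $-s_+/3$ (not a nowhere-vanishing tangent director), so the hairy-ball obstruction on $\partial V\cong S^2$ is circumvented by melting to the negative-uniaxial tensor $\frac{s_+}{6}\I-\frac{s_+}{2}\vec\nu\otimes\vec\nu$ at isolated points (or indeed everywhere on $\partial V$), giving smooth boundary data on the smoothed cuboid and hence a bounded $W^{1,2}$ extension. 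That is a real gap in the paper's write-up that your argument fills. Your closing step also spells out, via strong $W^{1,2}$ convergence and the embedding into $L^p$, $p<6$, why $\Q_\infty$ minimises $E_{LdG}$ over the class $\{E_{bc}=0\}$, which the paper asserts without detail.
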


\begin{proof}
The energy density in $E_\omega$ is non-negative, i.e. it is bounded from below, and $E_\omega$ is weakly sequentially lower semi-continuous (Lemma \ref{lemma: lower-semicontinuous}). We only need a coerciveness estimate, and the existence of a global minimiser follows from the direct methods in the calculus of variations. 
  
  By taking $\partial V_0=\partial V_{lr}$ in Lemma \ref{lemma: Poincare}, we have
\begin{equation}
  \begin{aligned}
    \Vert Q_1 \Vert_{W_V^{1,2}}^2 \leqslant & \frac{1}{\gamma_0} \left( \int_{\partial\V_{lr}}Q_1^2\mathrm{d}S+\int_V \Vert \nabla Q_1\Vert^2\mathrm{d}V\right)\\
 \leqslant & \frac{1}{\gamma_0}\left(2\int_{\partial\V_{lr}}\left[\left(Q_1+\frac{s_+}{3}\right)^2+\left(\frac{s_+}{3}\right)^2\right] \mathrm{d}S+\int_V \Vert \nabla Q_1\Vert^2\mathrm{d}V\right)
 \\
 \leqslant & \frac{2}{\gamma_0} (E_{bc}+\bar{C}+E_{elastic}),
\end{aligned}
\label{poincare}
\end{equation}
where $\bar{C}$ is a constant which only depends on $s_+$ and $V$. Subsequently, we have
\begin{equation}
  \begin{aligned}
  E_\omega(\Q)\geqslant & E_{elastic}+\omega E_{bc}
  \geqslant \min(1,\omega)(E_{elastic}+ E_{bc}) \\
  \geqslant & \frac{\min(1,\omega)\gamma_0}{2} \Vert Q_1 \Vert_{W_V^{1,2}}^2-\min(1,\omega)\bar{C}.
\end{aligned}
\label{eq: coercive}
\end{equation}
The same arguments apply to $\partial V_0 = \partial V_{tb}$, $\partial V_{fb}$ or $\partial V_{lr}$ and $Q_i,i=2,\cdots,5$, for different choices of the constants and hence, 
$E_\omega(\Q)$ is coercive.
 
Let $\Q(\omega_n)\in W^{1,2}_{V,\S_0}$ be a global minimiser of $E_{\omega_n}(\Q)$ and $\omega_n \rightarrow \infty$, 
\begin{equation}
\begin{aligned}
E_{\omega_{n_1}}(\Q(\omega_{n_2}))&=E_{LdG}(\Q(\omega_{n_2}))+\omega_{n_1} E_{bc}(\Q(\omega_{n_2}))\\
  &\geqslant E_{LdG}(\Q(\omega_{n_1})) + \omega_{n_1} E_{bc}(\Q(\omega_{n_1}))=E_{\omega_{n_1}}(\Q(\omega_{n_1})),
  \end{aligned}
    \label{eq: omega1omega2 1}
\end{equation}
\begin{equation}
\begin{aligned}
E_{\omega_{n_2}}(\Q(\omega_{n_1}))&=E_{LdG}(\Q(\omega_{n_1}))+\omega_{n_2} E_{bc}(\Q(\omega_{n_1}))\\
  &\geqslant E_{LdG}(\Q(\omega_{n_2})) + \omega_{n_2} E_{bc}(\Q(\omega_{n_2}))=E_{\omega_{n_2}}(\Q(\omega_{n_2})).
  \end{aligned}
  \label{eq: omega1omega2 2}
\end{equation}
Adding both sides of the inequalities \eqref{eq: omega1omega2 1} and \eqref{eq: omega1omega2 2}, we have
\begin{equation}
    (\omega_{n_2}-\omega_{n_1})E_{bc}(\Q(\omega_{n_1}))\geqslant (\omega_{n_2}-\omega_{n_1})E_{bc}(\Q(\omega_{n_2})),
\end{equation}
and hence,
\begin{equation}
E_{bc}(\Q(\omega_{n_2}))\leqslant E_{bc}(\Q(\omega_{n_1})),\text{ if } \omega_{n_2}> \omega_{n_1},
\label{eq: bc order}
\end{equation}
Substituting \eqref{eq: bc order} into \eqref{eq: omega1omega2 1}, it follows that
\begin{equation}
    E_{LdG}(\Q(\omega_{n_2}))\geqslant E_{LdG}(\Q(\omega_{n_1})),\text{ if } \omega_{n_2}> \omega_{n_1}. 
\end{equation}
Take any $\hat{\Q} \in W_V^{1,2}$ which satisfies $E_{bc}(\hat{\Q})=0$, and then we have the following sequence of inequalities:
\begin{equation}
  E_{LdG}(\hat{\Q})= E_{\omega_n}(\hat{\Q})\geqslant E_{\omega_n}(\Q(\omega_n)) \geqslant \omega_n E_{bc}(\Q(\omega_n))\geqslant 0,
\end{equation}
so that, 
\begin{equation}
  \lim_{\omega_n\rightarrow \infty} E_{bc}(\Q(\omega_n))=0.
\label{eq: limit boundary energy}
\end{equation}
If $\Q(\omega_n)$ has a subsequence, $\Q(\omega_{n_k})\rightarrow\Q_{\infty}$ strongly in $W^{1,2}_{V,\S_0}$ as $\omega_{n_k}\rightarrow \infty$, then $E_{bc}(\Q_{\infty})=0$ from \eqref{eq: limit boundary energy}. Thus the limit, $\Q_\infty \in W^{1,2}_{V,\S_0}$ is a global minimiser of the LdG energy in the admissible space. 
\end{proof}

\begin{lemma}\label{lemma: EL}
  The critical points of the functional \eqref{eq: energy total} in $W_{V,\S_0}^{1,2}$ satisfy the Euler-Lagrange equation,
  \begin{equation}
    \Delta \Q=\lambda^2\left(\frac{A}{2C}\Q-\frac{B}{2C}\left(\Q^2-\frac{\mathrm{tr}(\Q^2)}{3}\I \right)+\frac{1}{2}\mathrm{tr}(\Q^2)\Q\right), r \in V\setminus \partial V
    \label{EL_b}
  \end{equation}
  with the boundary condition
  \begin{equation}
      \partial_{\Vec{\nu}} \Q+\omega \left(\vec{\nu} \vec{\nu}^\top \Q+\Q\vec{\nu}\vec{\nu}^\top+\frac{2s_+}{3}\vec{\nu}\vec{\nu}^\top -\left(\frac{2}{3}\vec{\nu}^\top\Q\vec{\nu}+\frac{2s_+}{9}\right)\I\right)=0, r\in \partial V.
      \label{eq: boundary condition implicit}
  \end{equation}
\end{lemma}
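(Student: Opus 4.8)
The plan is to derive both relations as the first-variation (stationarity) conditions for $E_\omega$ on the constrained space $W^{1,2}_{V,\S_0}$. I would fix an arbitrary symmetric, trace-free test matrix $\Phi\in W^{1,2}_{V,\S_0}$, form the one-parameter family $\Q+t\Phi$, and impose $\frac{\mathrm{d}}{\mathrm{d}t}E_\omega(\Q+t\Phi)\big|_{t=0}=0$, handling the elastic term, the bulk potential, and the surface term separately. Writing $A:B=A_{ij}B_{ij}$ for the Frobenius product, the elastic variation is $\int_V \nabla\Q:\nabla\Phi\,\mathrm{d}V$, and integration by parts produces $-\int_V \Delta\Q:\Phi\,\mathrm{d}V+\int_{\partial V}(\partial_{\vec{\nu}}\Q):\Phi\,\mathrm{d}S$, which supplies the Laplacian in the interior and the normal-derivative term on $\partial V$. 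Differentiating the bulk potential in \eqref{energy_b} term by term gives the unconstrained gradient $\frac{A}{2C}\Q-\frac{B}{2C}\Q^2+\frac12\mathrm{tr}(\Q^2)\Q$.

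For the surface term I would differentiate $\|\Q\vec{\nu}+\frac{s_+}{3}\vec{\nu}\|^2$ to obtain $2(\Q\vec{\nu}+\frac{s_+}{3}\vec{\nu})\cdot(\Phi\vec{\nu})$; the key algebraic step is to re-express the resulting scalars as Frobenius products against $\Phi$. Using the symmetry of $\Q$ and $\Phi$, one checks $\vec{\nu}^\top\Q\Phi\vec{\nu}=\Phi:\tfrac12(\Q\vec{\nu}\vec{\nu}^\top+\vec{\nu}\vec{\nu}^\top\Q)$ and $\vec{\nu}^\top\Phi\vec{\nu}=\Phi:(\vec{\nu}\vec{\nu}^\top)$, so the surface variation equals $\int_{\partial V}\Phi:\big(\Q\vec{\nu}\vec{\nu}^\top+\vec{\nu}\vec{\nu}^\top\Q+\frac{2s_+}{3}\vec{\nu}\vec{\nu}^\top\big)\,\mathrm{d}S$. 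Collecting the interior and boundary contributions, stationarity reads $\int_V(-\Delta\Q+\lambda^2\frac{\partial f_b}{\partial\Q}):\Phi\,\mathrm{d}V+\int_{\partial V}\big[\partial_{\vec{\nu}}\Q+\omega(\Q\vec{\nu}\vec{\nu}^\top+\vec{\nu}\vec{\nu}^\top\Q+\frac{2s_+}{3}\vec{\nu}\vec{\nu}^\top)\big]:\Phi\,\mathrm{d}S=0$ for every admissible $\Phi$.

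I expect the only real subtlety to be the trace constraint. Because $\Phi$ is restricted to symmetric trace-free matrices rather than all symmetric matrices, the fundamental lemma of the calculus of variations forces the interior and boundary integrands to vanish only modulo a scalar multiple of $\I$; equivalently, I would append pointwise Lagrange multipliers $\mu(r)\I$ in $V$ and $\sigma(r)\I$ on $\partial V$ and then determine them by demanding that the resulting tensor equations be trace-free. Taking the trace in the interior, $\mathrm{tr}(\Delta\Q)=0$ while the gradient contributes $-\frac{B}{2C}\mathrm{tr}(\Q^2)$, which fixes $\mu$ and regroups $-\frac{B}{2C}\Q^2$ into $-\frac{B}{2C}(\Q^2-\frac{\mathrm{tr}(\Q^2)}{3}\I)$, recovering \eqref{EL_b}. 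On the boundary, $\mathrm{tr}(\partial_{\vec{\nu}}\Q)=0$ and the remaining terms contribute $\omega(2\vec{\nu}^\top\Q\vec{\nu}+\frac{2s_+}{3})$, so $\sigma=\omega(\frac23\vec{\nu}^\top\Q\vec{\nu}+\frac{2s_+}{9})$, which is precisely the subtracted identity term in \eqref{eq: boundary condition implicit}. A minor point worth noting is the smoothed-cuboid assumption from Fig. \ref{figure 1}, which guarantees that $\partial V$ is regular enough for the boundary integration by parts and the trace theorem to apply.
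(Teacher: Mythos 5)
Your proposal is correct and follows essentially the same route as the paper: compute the first variation against trace-free symmetric perturbations, integrate the elastic term by parts, rewrite the surface variation as a Frobenius product, and absorb the trace constraint into a scalar multiple of $\I$ (the paper exploits $\P\cdot\I=0$ to insert the multiplier directly, which is the same device as your explicit Lagrange multipliers $\mu$, $\sigma$ fixed by tracelessness). No substantive differences.
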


\begin{proof}
  Let $\Q \in W_{V,\S_0}^{1,2}$  be a critical point of \eqref{eq: energy total}
, and $\P \in W_{V,\S_0}^{1,2}$ be a perturbation.
We compute the first variation of \eqref{eq: energy total} as,
\begin{equation}\label{eq: E-L 1}
  \begin{aligned}
  0&=\frac{d}{dt|_{t=0}}E_\omega (\Q+t\P)\\
  &=\int_V \left( -\Delta \Q+\lambda^2\left(\frac{A}{2C}\Q-\frac{B}{2C}\left(\Q^2-\frac{\mathrm{tr}(\Q^2)}{3}\I \right)+\frac{1}{2}\mathrm{tr}(\Q^2)\Q\right)\right)\cdot \P\mathrm{d}V\\
  &+\int_{\partial V} \partial_{\vec{\nu}} \Q\cdot \P + \omega \left(\vec{\nu}^\top \Q\P\vec{\nu}+\vec{\nu}^\top \P\Q\vec{\nu}+\frac{2s_+}{3} \vec{\nu}^\top \P\vec{\nu} \right)\mathrm{d}S.
  \end{aligned}
\end{equation}
Recalling that $\mathrm{tr}\P = 0, \P\cdot \I=0$, we obtain
\begin{equation}\label{eq: E-L 2}
  \begin{aligned}
  &\int_{\partial V} \left(\vec{\nu}^\top \Q\P\vec{\nu}+\vec{\nu}^\top \P\Q\vec{\nu}+\frac{2s_+}{3} \vec{\nu}^\top \P\vec{\nu} \right)\mathrm{d}S\\
  &=\int_{\partial V} \left(\vec{\nu} \vec{\nu}^\top \Q+\Q\vec{\nu}\vec{\nu}^\top+\frac{2s_+}{3}\vec{\nu}\vec{\nu}^\top\right)\cdot \P \mathrm{d}S\\
  &=\int_{\partial V} \left(\vec{\nu} \vec{\nu}^\top \Q+\Q\vec{\nu}\vec{\nu}^\top+\frac{2s_+}{3}\vec{\nu}\vec{\nu}^\top -\left(\frac{2}{3}\vec{\nu}^\top\Q\vec{\nu}+\frac{2s_+}{9}\right)\I\right)\cdot \P \mathrm{d}S,
  \end{aligned}
\end{equation}
where $\left(\frac{2}{3}\vec{\nu}^\top\Q\vec{\nu}+\frac{2s_+}{9}\right)\I$ is the Lagrange multiplier associated with the tracelessness constraint. The Euler-Lagrange equations follow from \eqref{eq: E-L 1} and \eqref{eq: E-L 2}, by the fundamental theorem of the calculus of variations.
\end{proof}

\begin{remark}
 The boundary energy does not change the Euler-Lagrange equation \eqref{EL_b}, so that we can improve the regularity of critical point $\Q \in W_{V,S_0}^{1,2}$ as Proposition 13 in \cite{majumdar2010landau}. An argument based on elliptic regularity in Theorem 3 of \cite{dang1992saddle} or Proposition 3.12 of \cite{canevari_majumdar_wang_harris} shows that the critical points $\Q \in W_{V,S_0}^{1,2}$ of $E_\omega(\Q)$ are classical solutions of \eqref{EL_b}.
\end{remark}

\begin{proposition} \label{prop:maximum}
  There exists a constant $M(A,B,C)$ such that the critical point of $E_\omega(\Q)$ satisfy the inequality,
  \begin{equation}
    \Vert \Q\Vert_{L_{V}^\infty}=\text{ess sup}_{r\in V}\left|\Q(r)\right|\leqslant M,
  \end{equation}
  where $| \Q | = \sqrt{\mathrm{tr} \Q^2}$.
\end{proposition}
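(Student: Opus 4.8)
The plan is to establish an $L^\infty$ bound on critical points via a maximum-principle argument applied to the scalar quantity $|\Q|^2 = \mathrm{tr}\,\Q^2$. First I would note that the Remark preceding the statement guarantees that critical points are classical solutions of the Euler--Lagrange system \eqref{EL_b}, so all pointwise differential manipulations are justified. The strategy follows the now-standard approach (e.g. in \cite{majumdar2010landau,canevari_majumdar_wang_harris}): introduce the auxiliary function $\varphi := |\Q|^2 = Q_{ij}Q_{ij}$ and compute $\Delta \varphi$ using the bulk equation, aiming to show that $\varphi$ cannot attain an interior maximum above a suitable threshold $M^2$.

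Concretely, I would compute $\tfrac{1}{2}\Delta \varphi = |\nabla \Q|^2 + Q_{ij}\,\Delta Q_{ij}$ and substitute $\Delta \Q$ from \eqref{EL_b}. Contracting the right-hand side against $\Q$ and using the algebraic identities $\mathrm{tr}\,\Q = 0$ together with $Q_{ij}\bigl(\Q^2 - \tfrac{\mathrm{tr}(\Q^2)}{3}\I\bigr)_{ij} = \mathrm{tr}\,\Q^3$, this yields a relation of the schematic form
\begin{equation*}
\tfrac{1}{2}\Delta \varphi = |\nabla \Q|^2 + \lambda^2\!\left(\tfrac{A}{2C}\varphi - \tfrac{B}{2C}\mathrm{tr}\,\Q^3 + \tfrac{1}{2}\varphi^2\right).
\end{equation*}
Since $|\nabla\Q|^2 \geqslant 0$, the bulk term inside the parentheses controls the sign of $\Delta\varphi$ at an interior maximum. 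Using the sharp bound $|\mathrm{tr}\,\Q^3| \leqslant \tfrac{1}{\sqrt{6}}\,\varphi^{3/2}$, valid for symmetric traceless $3\times 3$ matrices, the parenthetical expression is bounded below by a cubic polynomial in $\sqrt{\varphi}$ with positive leading coefficient; hence there is a constant $M = M(A,B,C)$ so that this term is strictly positive whenever $\varphi > M^2$. At an interior maximum of $\varphi$ one has $\Delta\varphi \leqslant 0$, which forces $\varphi \leqslant M^2$ there, giving the interior estimate.

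The genuinely new feature, and what I expect to be the main obstacle, is that the surfaces now carry the Robin-type boundary condition \eqref{eq: boundary condition implicit} rather than Dirichlet data, so I must also rule out a boundary maximum of $\varphi$. The plan is to use the Hopf-type / boundary version of the maximum principle: at a boundary maximum point the outward normal derivative satisfies $\partial_{\vec\nu}\varphi \geqslant 0$, while $\partial_{\vec\nu}\varphi = 2\,Q_{ij}\,\partial_{\vec\nu}Q_{ij}$ can be evaluated by contracting \eqref{eq: boundary condition implicit} with $\Q$. I would compute $\Q \cdot \bigl(\vec\nu\vec\nu^\top\Q + \Q\vec\nu\vec\nu^\top + \tfrac{2s_+}{3}\vec\nu\vec\nu^\top - (\cdots)\I\bigr)$, using $\Q\cdot\I = 0$ to discard the Lagrange-multiplier term, and check that the resulting expression has a sign (for $\varphi$ large relative to $s_+^2$) that is incompatible with $\partial_{\vec\nu}\varphi \geqslant 0$; this would enlarge $M$ if necessary to also dominate the surface contribution. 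Combining the interior and boundary arguments yields the uniform bound $\Vert\Q\Vert_{L^\infty_V} \leqslant M$. An alternative, perhaps cleaner route is to apply the truncation/comparison method of Proposition 3 in \cite{canevari_majumdar_wang_harris}: replace $\Q$ by its truncation onto the ball $\{|\Q|\leqslant M\}$ and verify that this truncation does not increase $E_\omega$, where the key point is again that the surface energy density $\Vert\Q\vec\nu + \tfrac{s_+}{3}\vec\nu\Vert^2$ is monotone under the radial truncation once $M$ exceeds a threshold controlled by $s_+$; this sidesteps the delicate boundary-derivative computation and is the version I would write up if the Hopf estimate proves technically cumbersome.
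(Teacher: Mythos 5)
Your interior computation is fine and coincides with the paper's: contracting \eqref{EL_b} with $\Q$ gives $\tfrac12\Delta|\Q|^2\geqslant \tfrac12|\Q|^4+(\text{lower order})$, which rules out an interior maximum above some $M(A,B,C)$. The gap is in the boundary step, which is the actual crux here. Contracting the Robin condition \eqref{eq: boundary condition implicit} with $\Q$ (the Lagrange-multiplier term drops since $\Q\cdot\I=0$) gives
\begin{equation*}
\partial_{\vec{\nu}}\bigl(|\Q|^2/2\bigr)=-2\omega\Bigl(\vec{\nu}^\top\Q^2\vec{\nu}+\tfrac{s_+}{3}\vec{\nu}^\top\Q\vec{\nu}\Bigr)=-2\omega\Bigl(\bigl|(\Q+\tfrac{s_+}{6}\I)\vec{\nu}\bigr|^2-\tfrac{s_+^2}{36}\Bigr),
\end{equation*}
and this is \emph{not} signed, nor does it become signed for large $|\Q|$: at a boundary point where $\Q\vec{\nu}=-\tfrac{s_+}{6}\vec{\nu}$ one gets $\partial_{\vec{\nu}}(|\Q|^2/2)=+\omega s_+^2/18>0$, while $|\Q|$ can be arbitrarily large there because the boundary condition only constrains the action of $\Q$ on $\vec{\nu}$, not its tangential part. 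So there is no incompatibility with $\partial_{\vec{\nu}}|\Q|^2\geqslant 0$ at a boundary maximum, and ``enlarging $M$'' cannot repair this. The paper's proof avoids the problem by shifting the variable: it picks a $C^2$ extension $\tilde{\Q}$ of $\vec{\nu}\vec{\nu}^\top$ with $\partial_{\vec{\nu}}\tilde{\Q}=0$ on $\partial V$ and applies the maximum principle to $|\P|^2$ with $\P=\Q+\tfrac{s_+}{2}\tilde{\Q}$; the cross terms then assemble into the perfect square $-\partial_{\vec{\nu}}(|\P|^2/2)=2\omega\,\vec{\nu}^\top(\Q+\tfrac{s_+}{3}\I)^2\vec{\nu}\geqslant 0$, which excludes a boundary maximum unconditionally, at the cost of only a harmless cubic correction $\tilde{p}(\Q)$ in the interior inequality. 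This choice of shift is the missing idea.

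Your fallback via truncation does not close the gap either, for two reasons. First, the energy-comparison argument applies only to (local) minimisers, whereas the proposition is stated for all critical points of $E_\omega$. Second, the claimed monotonicity of the surface density under radial truncation is false: for $\Q\vec{\nu}=-\tfrac{s_+}{6}\vec{\nu}$ the density $\Vert\Q\vec{\nu}+\tfrac{s_+}{3}\vec{\nu}\Vert^2$ strictly \emph{increases} under $\Q\mapsto t\Q$ with $t<1$, again regardless of how large $|\Q|$ (hence $M$) is, so no threshold ``controlled by $s_+$'' exists.
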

\begin{proof}
  The proof is analogous to Lemma 3.11 in \cite{canevari_majumdar_wang_harris}. Let $\P=\Q+\frac{s_+}{2}\tilde{\Q}$, where $\tilde{\Q}\in C^2(V)$ is a fixed auxiliary function and satisfies $$\tilde{\Q}(r)=\vec{\nu}  \vec{\nu}^\top, \partial_{\vec{\nu}} \tilde{\Q}(r)=0, r\in \partial V.$$
The existence of $\tilde{\Q}$ follows from the smoothness of $\partial V$.  From Lemma \ref{lemma: EL},  for $r \in \partial V$, we have
  \begin{equation}
    \begin{aligned}
  -\partial_{\vec{\nu}}(|\P|^2/2)&=-\partial_{\vec{\nu}} \P \cdot \P=-\partial_{\vec{\nu}} \Q \cdot \P\\
  &=\omega \left(\vec{\nu} \vec{\nu}^\top \Q+\Q\vec{\nu}\vec{\nu}^\top+\frac{2s_+}{3}\vec{\nu}\vec{\nu}^\top -\left(\frac{2}{3}\vec{\nu}\Q\vec{\nu}+\frac{2s_+}{9}\right)\I\right)\cdot \left(\Q+\frac{s_+}{2}\vec{\nu}\vec{\nu}^\top\right)\\
  &=2\omega \left(\vec{\nu}^\top \left(\Q+\frac{s_+}{3}\I\right)^2\vec{\nu}\right) \geqslant 0,
  \end{aligned}
    \label{eq: maximum principle boundary}
  \end{equation}
and for $r \in V\setminus \partial V$,
\begin{equation}
\begin{aligned}
  \Delta (|\P|^2/2)&=\Delta \P \cdot \P + |\nabla \P|^2\geqslant \Delta \P \cdot \P \\ 
  &\geqslant \Delta \Q \cdot \Q +\frac{s_+}{2}\Delta \Q \cdot \tilde{\Q} +\frac{s_+}{2}\Delta \tilde{\Q} \cdot \Q +\frac{s_+^2}{4} \Delta \tilde{\Q} \cdot \tilde{\Q}= \frac{1}{2}|\Q|^4+\tilde{p}(\Q),
\end{aligned}
  \label{eq: maximum principle laplace}
\end{equation}
where $\tilde{p}(\Q)$ is a third order polynomial of $\Q$. Hence, there exists a positive $M$ (that depends on $A$, $B$, $C$) such that the right-hand side of \eqref{eq: maximum principle laplace} is positive for $|\Q|\geqslant M$. 
By applying the maximum
principle to \eqref{eq: maximum principle boundary} and \eqref{eq: maximum principle laplace}, we deduce that $|\Q|\leqslant M(A,B,C)$.
\end{proof}

\begin{proposition}
   For any $B, C > 0$,  there exists a positive constant $\lambda^*$, depending on $A,B,C,L,W,V$, such that the $E_\omega(\Q)$ has a unique critical point $\hat{\Q} \in W_{V,\S_0}^{1,2}$ for $0\leqslant \lambda<\lambda^*$.
   \end{proposition}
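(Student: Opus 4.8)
The plan is to prove uniqueness by an energy/Poincar\'e argument on the difference of two critical points, exploiting three facts already available: the uniform $L^\infty$ bound of Proposition \ref{prop:maximum}, the sign of the anchoring bilinear form, and the componentwise Poincar\'e mechanism of \eqref{poincare}. Let $\Q^{(1)},\Q^{(2)}\in W^{1,2}_{V,\S_0}$ be two critical points of $E_\omega$ with $0<\lambda<\lambda^*$, and set $D:=\Q^{(1)}-\Q^{(2)}$. By Proposition \ref{prop:maximum} both satisfy $\|\Q^{(i)}\|_{L^\infty_V}\le M(A,B,C)$. From the weak formulation underlying Lemma \ref{lemma: EL}, each $\Q^{(i)}$ obeys
$$\int_V \nabla\Q^{(i)}:\nabla\P\,dV+\lambda^2\int_V g(\Q^{(i)}):\P\,dV+\omega\int_{\partial V}\Big(\vec{\nu}^\top\Q^{(i)}\P\vec{\nu}+\vec{\nu}^\top\P\Q^{(i)}\vec{\nu}+\tfrac{2s_+}{3}\vec{\nu}^\top\P\vec{\nu}\Big)dS=0$$
for all $\P\in W^{1,2}_{V,\S_0}$, where $g(\Q)$ is the right-hand side of \eqref{EL_b} divided by $\lambda^2$. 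Subtracting the two identities and taking $\P=D$, the constant $\tfrac{2s_+}{3}$ surface term cancels and the surface contribution collapses to $2\omega\int_{\partial V}\vec{\nu}^\top D^2\vec{\nu}\,dS=2\omega\int_{\partial V}|D\vec{\nu}|^2\,dS\ge 0$.

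First I would record the resulting identity
$$\int_V|\nabla D|^2\,dV+2\omega\int_{\partial V}|D\vec{\nu}|^2\,dS=-\lambda^2\int_V\big(g(\Q^{(1)})-g(\Q^{(2)})\big):D\,dV .$$
Since $g$ is a fixed polynomial map, it is Lipschitz on the ball $\{|\Q|\le M\}$ with constant $L_g=L_g(M,A,B,C)$, so the right-hand side is bounded above by $\lambda^2 L_g\|D\|_{L^2_V}^2$. Both terms on the left are nonnegative, the boundary term crucially so.

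The key step is to convert $\|D\|_{L^2_V}^2$ into the left-hand quantities via Lemma \ref{lemma: Poincare}, applied componentwise exactly as in the coercivity estimate \eqref{poincare}. The point is geometric: on the three pairs of faces the vectors $D\xhat,D\yhat,D\zhat$ together exhaust all five independent entries of the symmetric traceless tensor $D$, so each scalar component of $D$ appears inside $|D\vec{\nu}|^2$ on at least one face $\partial V_0\in\{\partial V_{lr},\partial V_{fb},\partial V_{tb}\}$. Applying \eqref{PF} to each component $u$ with the appropriate face, using $\|u\|_{L^2_{\partial V_0}}^2\le\int_{\partial V}|D\vec{\nu}|^2\,dS$ and $\|u\|_{W^{1,2}_V}^2\ge\|u\|_{L^2_V}^2$, and summing over components yields a constant $C_1=C_1(\gamma_0(h),V)$ with
$$\|D\|_{L^2_V}^2\le C_1\Big(\int_{\partial V}|D\vec{\nu}|^2\,dS+\|\nabla D\|_{L^2_V}^2\Big).$$
Substituting this bound and writing $\omega=\lambda W/\sqrt{2CL}=:c_W\lambda$ gives
$$(1-\lambda^2 L_g C_1)\,\|\nabla D\|_{L^2_V}^2+(2c_W\lambda-\lambda^2 L_g C_1)\int_{\partial V}|D\vec{\nu}|^2\,dS\le 0 .$$

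Finally I would set $\lambda^*:=\min\{(L_g C_1)^{-1/2},\,2c_W/(L_g C_1)\}$, which depends only on $A,B,C,L,W$ and the geometry $V$ (through $\gamma_0(h)$). For $0<\lambda<\lambda^*$ both coefficients are strictly positive, forcing $\|\nabla D\|_{L^2_V}=0$ and $\int_{\partial V}|D\vec{\nu}|^2\,dS=0$; the displayed Poincar\'e bound then gives $\|D\|_{L^2_V}=0$, i.e. $\Q^{(1)}=\Q^{(2)}$. Existence of a critical point for $\lambda>0$ is already guaranteed by the global minimiser of Proposition \ref{pro:1}. The main obstacle I anticipate is the componentwise Poincar\'e bookkeeping: the anchoring term controls only $D\vec{\nu}$ rather than the full boundary trace of $D$, so one must verify carefully that the cuboid's three distinct face-normals collectively pin down every entry of the symmetric traceless $D$, and that this anchoring contribution (hence $\omega>0$, i.e. $\lambda>0$) is genuinely what eliminates the constant mode that would otherwise survive.
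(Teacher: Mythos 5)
Your argument is correct and is, in substance, the same proof as the paper's: the paper establishes strict convexity of $E_\omega$ on the convex set $S=\{\Vert \Q\Vert_{L^\infty_V}\leqslant M\}$ via the midpoint inequality, using exactly your three ingredients (the maximum-principle $L^\infty$ bound, the Poincar\'e--Friedrichs inequality with a boundary trace term, and $\lambda^2$-smallness of the bulk contribution), whereas you prove the equivalent strict monotonicity of the first variation by testing the difference of the two Euler--Lagrange weak formulations with $D=\Q^{(1)}-\Q^{(2)}$. The only real difference is presentational: you spell out the face-by-face bookkeeping showing that the three distinct normals jointly control all five independent components of $D$, a point the paper compresses into the remark that the same argument applies to the remaining faces and components.
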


\begin{proof}
  The existence of a global minimiser and hence, a critical point of \eqref{eq: energy total} is proven above. We define the convex set 
  \[S=\{\Q:\Q \in W_{V,\S_0}^{1,2},\Vert \Q\Vert_{L_{V}^\infty}\leqslant M\},\] 
  where $M(A,B,C)$ is defined in Proposition~\ref{prop:maximum}. Then, we prove that $E_{\omega}$ is strictly convex on $S$. For any $\Q,\bar{\Q} \in S$, 
  \begin{equation}
    \begin{aligned}  E_\omega&(\frac{\Q+\bar{\Q}}{2})-\frac{1}{2}E_\omega(\Q)-\frac{1}{2}E_\omega(\bar{\Q})\\
    =&-\frac{1}{8}\Vert \nabla (\Q-\bar{\Q})\Vert_{L_V^2}^2 - \frac{\lambda W}{4\sqrt{2CL}} \int_{\partial V} \Vert \left(\Q-\bar{\Q}\right)\vec{\nu} \Vert^2 \mathrm{d}S \\
    &+\lambda^2 \int_V \left( f_b(\frac{\bar{\Q}+\Q}{2})-\frac{1}{2}f_b(\Q)-\frac{1}{2}f_b(\bar{\Q}) \right)\mathrm{d}V.
    \end{aligned}
  \label{E_convex}.
  \end{equation}
From Lemma \ref{lemma: Poincare}, if $0<\lambda\leqslant \frac{\sqrt{CL}}{\sqrt{2}W}$ s.t. $\frac{\lambda W}{4\sqrt{2CL}}\leq \frac{1}{8}$, then
\begin{equation}
  -\frac{1}{8}\Vert \nabla (\Q-\bar{\Q})\Vert_{L_V^2}^2 - \frac{\lambda W}{4\sqrt{2CL}} \int_{\partial V} \Vert \left(\Q-\bar{\Q}\right)\vec{\nu} \Vert^2 \mathrm{d}S\\
  \leqslant -\frac{\gamma_0(h)\lambda W}{4\sqrt{2CL}} \Vert  \Q-\bar{\Q}\Vert_{W_V^{1,2}}^2.
\end{equation}
Noting that $\Q,\bar{\Q} \in S$, we obtain
\begin{equation}
  \lambda^2 \int_V \left( f_b(\frac{\bar{\Q}+\Q}{2})-\frac{1}{2}f_b(\Q)-\frac{1}{2}f_b(\bar{\Q}) \right)\mathrm{d}V\leqslant \tilde{C}\lambda^2 \Vert  \Q-\bar{\Q}\Vert_{W_V^{1,2}}^2,
\end{equation}
where $\tilde{C}(A,B,C)>0$, and hence 
\begin{equation}
E_\omega(\frac{\Q+\bar{\Q}}{2})-\frac{1}{2}E_\omega(\Q)-\frac{1}{2}E_\omega(\bar{\Q})\leqslant -\frac{\gamma_0(h)\lambda W}{4\sqrt{2CL}} \Vert\Q-\bar{\Q}\Vert_{W_V^{1,2}}^2 + \tilde{C}\lambda^2 \Vert\Q-\bar{\Q}\Vert_{W_V^{1,2}}^2.
\end{equation} 
Thus, $E_\omega$ is strictly convex on $S$, for \begin{equation}
0<\lambda<\lambda^*(A,B,C,L,W)=\min \left(\frac{\sqrt{CL}}{\sqrt{2}W},\frac{\gamma_0(h)W}{4\sqrt{2CL}\tilde{C}(A,B,C)}\right).
\end{equation}
\end{proof}

\begin{figure}[H]
  \centering
  \includegraphics[width=.7\textwidth]{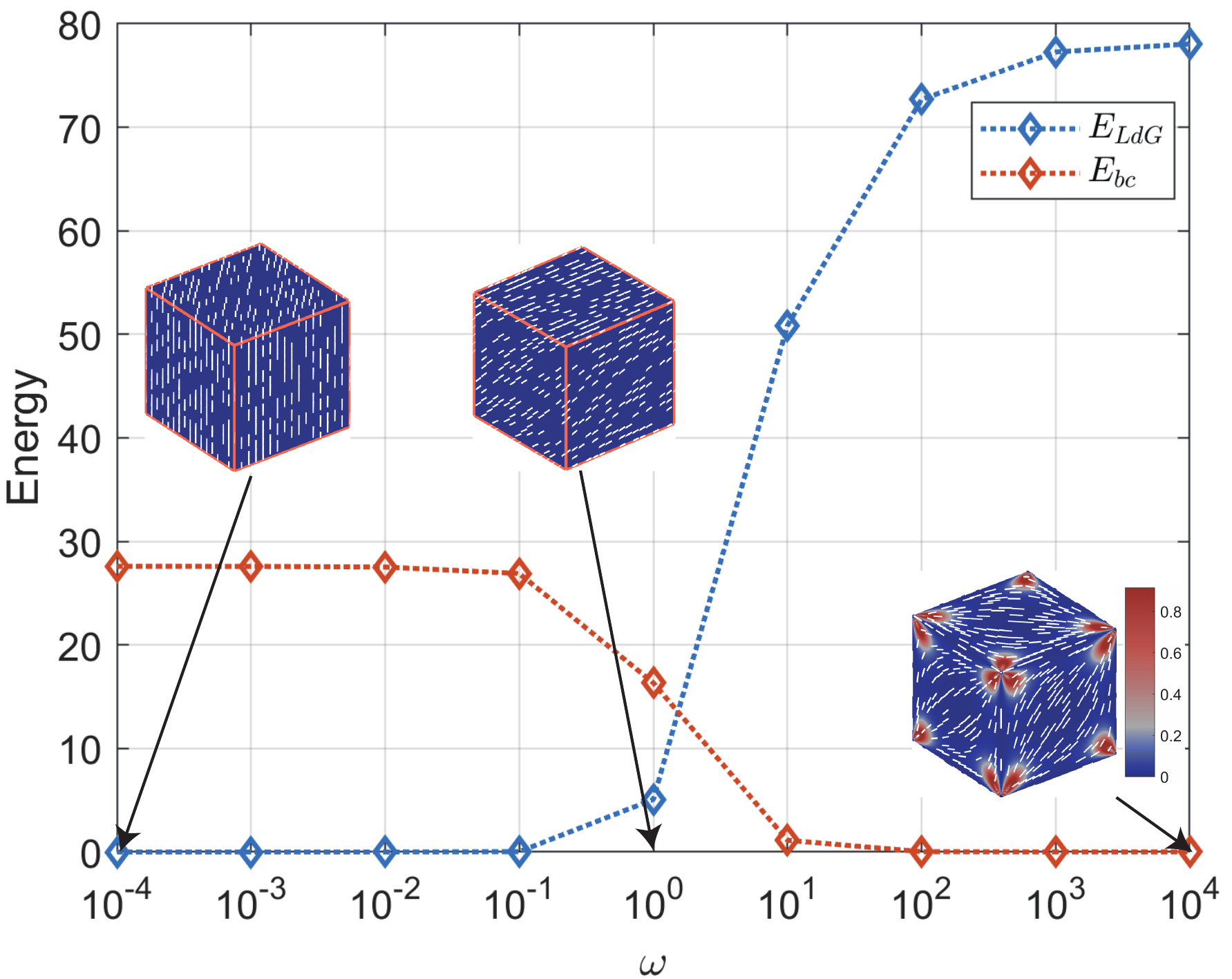}
  \caption{
  The energy plots of $E_{LdG}(\mathbf{Q}(\omega))$ and $E_{bc}(\mathbf{Q}(\omega))$ versus $\omega$ at $h = 1$ and $\lambda^2 = 50$, where $\mathbf{Q}(w)$ is the global minimiser of the free energy $E_{\omega}$. In the configurations for  $\omega = 10^{-4}$, $10^{0}$, and $10^4$, the colour bar labels the biaxiality parameter $\beta^2$ in \eqref{eq: biaxiality parameter} and the white lines represent the directors, i.e. the eigenvector corresponding to the largest eigenvalue of $\mathbf{Q}$. We use this visualization method with the white lines and the colour bar for the following figures. 
For the configurations with $\omega = 10^{-4}$ and $\omega = 10^{0}$, the orange edges are used for better 3D visualization.
}
  \label{figure 2}
\end{figure}

We numerically solve for the global energy minimiser, $\Q(\omega)$, for different values of $\omega$ in Fig. \ref{figure 2}. As $\omega$ increases, the LdG energy $E_{LdG}(\Q(\omega))$ increases to a constant value $E^*$, and the surface energy $E_{bc}(\Q(\omega))$ decreases to zero, which is consistent to Proposition \ref{pro:1}. For very weak anchoring ($\omega=10^{-4}$), the LdG energy of the global minimiser almost vanishes, that is, {$\Q(\omega)$ is almost constant and uniaxial everywhere. For modest anchoring ($\omega=1$), the global minimiser has non-zero LdG energy but the surface energy decreases compared to the previous case. For strong anchoring ($\omega=10^{4}$), the surface energy of the global minimiser almost vanishes, and defects appear at the cuboid vertices. Based on Fig. \ref{figure 2} and \cite{doi:10.1080/02678290903056095}, we take $W=0.01$ Jm$^{-2}$ so that $\omega=\frac{\lambda W}{\sqrt{2CL}}$ is approximately around $10^2$, which is in the strong anchoring regime.

\section{Numerical method}\label{sec:numerical method}

In this section, we describe the numerical methods used to compute the critical points of \eqref{eq: energy total}. The critical points, $\Q$, are solutions of the Euler-Lagrange equations (\ref{EL_b}), which are a system of five nonlinear partial differential equations, for the five components of the $\Q$-tensor, $Q_1,\cdots,Q_5$, in \eqref{eq:5-degree}.
 A critical point, $\Q$, is stable if the Hessian, $\nabla^2 E_\omega(\Q)$, only has positive eigenvalues (so that it is a local minimum), and unstable if $\nabla^2 E_\omega(\Q)$ has at least one negative eigenvalue. More precisely, a critical point $\Q$ is an index-$k$ saddle point for which  $\nabla^2 E_\omega(\Q)$ has exactly $k$ negative eigenvalues: $\lambda_1 \leqslant \cdots \leqslant \lambda_k<0$, corresponding to $k$ unit eigenvectors $\hat{\_v}_1,\cdots,\hat{\_v}_k$ subject to $\big\langle{\hat{\_v}_i}, \hat{{\_v}}_j \big\rangle = \delta_{ij}$, $1\leqslant i, j \leqslant k$. A stable critical point $\Q$ is an index-0 critical point, i.e. the smallest eigenvalue of $\nabla^2 E_\omega(\Q)$ is positive. Typically, a stable critical point can be easily found by the gradient descent method using a proper initial guess, however, it is not easy to provide good initial conditions since we do not have prior knowledge of the stable critical points.  In recent works \cite{shi2023hierarchies,han2022prism,shi2022nematic,han2020solution,jcm2023}, the solution landscape and saddle dynamics (SD) method \cite{2019High,yin2020constrained,zhang2021optimal,scm2023,JJIAM2023} have been successfully used to efficiently compute the critical points of LdG/rLdG free energy in 2D or 3D domains with Dirichlet, Neumann or mixed boundary conditions.

The SD for finding an index-$k$ saddle point $\hat{\Q}$, (denoted by $k$-SD) is defined to be,
\begin{equation}
  \left\{
  \begin{aligned}
  \dot{\Q}&=- (\I-2\sum_{i=1}^k {\_v}_i{\_v}_i^\top)\nabla E_\omega(\Q), \\
    \dot{\_v}_i&=-   (\I-{\_v}_i{\_v}_i^\top-\sum_{j=1}^{i-1}2{\_v}_j{\_v}_j^\top)\nabla^2 E_\omega(\Q) \_v_i,\ i=1,2,\cdots,k ,\\
  \end{aligned}
  \right.
\label{eq: SD}
\end{equation}
where $\I$ is the identity operator. To avoid evaluating the Hessian of $E_\omega(\Q)$, we use the dimer
\begin{equation}
  h(\Q,\_v_i)=\frac{\nabla E_\omega(\Q+l\_v_i)-\nabla E_\omega(\Q-l\_v_i)}{2l}
\end{equation}
as an approximation of $\nabla ^2 E_\omega(\Q)\_v_i$, with a small dimer length $2l$. By setting the $k$-dimensional subspace $\hat{\mathcal{V}}=\text{span} \big \{ \hat{\_v}_1,\cdots,\hat{\_v}_k \big \}$, $\hat{\Q}$ is a local maximum on $\hat{\Q}+\hat{\mathcal{V}}$ and a local minimum on $\hat{\Q}+\hat{\mathcal{V}}^\perp$, where $\hat{\mathcal{V}}^\perp$ is the orthogonal complement of $\hat{\mathcal{V}}$. The dynamics for $\Q$ in \eqref{eq: SD} can be written as
\begin{equation}
\begin{aligned}
\dot{\Q}&=\left(\I-\sum_{i=1}^k {\_v}_i{\_v}_i^\top\right) \left(-\nabla E_\omega(\Q)\right)+ \left(\sum_{i=1}^k {\_v}_i{\_v}_i^\top\right) \nabla E_\omega(\Q) \\
&= \left( \I-\mathcal{P}_{\mathcal{V}}  \right)\left(-\nabla E_\omega(\Q)\right)+ \mathcal{P}_{\mathcal{V}} \left(\nabla E_\omega(\Q)\right),
\end{aligned}
\end{equation}
where $\mathcal{P}_{\mathcal{V}}\nabla E_\omega(\Q)=\left(\sum_{i=1}^k {\_v}_i{\_v}_i^\top\right)\nabla E_\omega(\Q)$ is the orthogonal projection of $\nabla E_\omega(\Q)$ on $\mathcal{V}=\text{span} \big \{ \_v_1,\cdots,\_v_k \big \}$. Thus, $\left( \I-\mathcal{P}_{\mathcal{V}}  \right)\left(-\nabla E_\omega(\Q)\right)$ is a descent direction on $\mathcal{V}^\perp$, and $\mathcal{P}_{\mathcal{V}} \left(\nabla E_\omega(\Q)\right)$ is an ascent direction on $\mathcal{V}$. 

The dynamics for $\_v_i, i=1,2,\cdots,k$ in \eqref{eq: SD} can be obtained by minimising the $k$ Rayleigh quotients simultaneously with the gradient type dynamics,
\begin{equation}
\min_{{\_v}_i}\text{  }\left<\_v_i, \nabla ^2 E_\omega(\Q)\_v_i\right>,\ \text{s.t.}\ \left<\_v_i,\_v_j\right>=\delta_{ij}, \ j=1,2,\cdots,i,
\end{equation}
which generates the subspace $\mathcal{V}$ by computing the eigenvectors corresponding to the smallest $k$ eigenvalues of $\nabla^2 E_\omega(\Q)$.

In the remainder of this section, we outline our numerical discretization methods in detail. The non-dimensionalised domain $V=[-1,1]^2 \times [-h,h]$ is discretized into $N$ nodes with a small spatial distance $\delta x$. We elaborate on the numerical issues by using $k$-saddle dynamics to find a target saddle point, $\hat{\Q}$.

The condition number of $J(\hat{\Q})$, the Jacobian operator of the $k$-saddle dynamics, depends on the condition number of $\nabla_{\delta x}^2 E_{\omega}(\hat{\Q})$ (spatial discretization of $\nabla^2 E_{\omega}(\hat{\Q})$), i.e. $Cond_2(J(\hat{\Q}))\geq Cond_2(\nabla_{\delta x}^2 E_{\omega}(\hat{\Q}))$ \cite{shi2023hierarchies}, where
\begin{equation}
  \nabla_{\delta x} ^2 E_\omega(\hat{\Q})=\nabla_{\delta x} ^2 E_{LdG}(\hat{\Q})+\omega \nabla_{\delta x}^2 E_{bc}(\hat{\Q}).
\end{equation}
The spectral decomposition of $\nabla_{\delta x}^2 E_{\omega}(\hat{\Q})$ is 
\begin{equation}
    \nabla_{\delta x}^2 E_\omega(\hat{\Q})=\sum_{i=1}^{N} \lambda_i \_v_i^* {\_v_i^*}^\top, |\lambda_1|\leqslant |\lambda_2|\leqslant\cdots \leqslant |\lambda_N|,
\end{equation}
we have 
\begin{equation}
Cond_2(\nabla_{\delta x}^2 E_\omega(\hat{\Q})) = |\lambda_N|/|\lambda_1|.
\end{equation}
On one hand, with large $h$ and small $\omega$, the $\lambda_1$ for some target saddle point $\hat{\Q}$ is small since we can move the middle state on $z=0$ up and down without a significant energetic cost \cite{shi2023hierarchies}, i.e. $Cond_2(\nabla_{\delta x}^2 E_\omega(\hat{\Q}))$ is large.
On the other hand, as $\omega\to\infty$, the eigenvalues of the $\omega\nabla_{\delta x} ^2 E_{bc}(\hat{\Q})$ tend to infinity, while the eigenvalues of the $\nabla_{\delta x} ^2 E_{LdG}(\hat{\Q})$ remain bounded, i.e. $Cond_2(\nabla_{\delta x}^2 E_\omega(\hat{\Q}))$ is large for large enough $\omega$. Both cases can lead to ill-conditioned and stiff dynamics \cite{luo2022sinum}. 

To deal with the ill-conditioning, we adopt the same numerical scheme as in \cite{shi2023hierarchies} which studies the solution landscape of nematic liquid crystal in 3D cuboid with Dirichlet boundary conditions on the lateral surfaces and Neumann boundary conditions on the top and bottom, to deal with the $k$-saddle dynamics of $E_{LdG}(\Q)$. The linear term in \eqref{eq: SD} is implicitly discretized for numerical stability. The nonlinear term, $|\Q|^2\Q$, is also semi-implicitly discretized in time direction as $|\Q^n|^2\Q^{n+1}$, for better numerical stability. The term $|\Q^n|^2\Q^{n+1}$ is beneficial for solving linear equations in the semi-implicit scheme, because it is a positive definite term of the diagonal elements. Instead of re-generating unstable eigendirections with the gradient type dynamics in \eqref{eq: SD}, we apply a single-step Locally Optimal Block Preconditioned Conjugate Gradient (LOBPCG) method \cite{Knyazev2001TowardTO} to calculate the unstable eigendirections, and  Hessians are also approximated by dimers \cite{2019High}.

Next, we deal with the stable scheme for $E_{bc}(\Q)$.
In the right hand side of the dynamics in \eqref{eq: SD}, the gradient of $E_{bc}(\Q)$ is given by 
\begin{equation}
  \nabla E_{bc}(\Q)
  =\vec{\nu} \vec{\nu}^\top \Q+\Q\vec{\nu}\vec{\nu}^\top+\frac{2s_+}{3}\vec{\nu}\vec{\nu}^\top,
  \label{eq:nabla_E_bc}
\end{equation}
which has both coupling and non-coupling terms.
For example, on the top and bottom surfaces, $\vec{\nu}=(0,0,\pm1)$, the component of $\nabla E_{bc}(\mathbf Q)$ in the dynamics of $Q_1$ is 
$-2Q_{1}-2Q_{2}+\frac{2s_+}{3}$,
where $-2Q_{1}$ is the non-coupling term and $-2Q_{2}$ is the coupling term, i.e. a non-$Q_1$ variable, $Q_2$, appears in the dynamics of $Q_1$.
The non-coupling linear terms are implicitly discretized for numerical stability and the coupling terms are explicitly discretized, e.g., the component of $\nabla_{\delta x} E_{bc}(\Q_n,\Q_{n+1})$ in the dynamics of $Q_1$ is
$-2Q_{1,n+1}-2Q_{2,n}+\frac{2s_+}{3}$,
to decouple the (five) dynamics of $Q_i$, $i = 1,\cdots, 5$ and keep the block diagonal structure of the iteration matrix, so that the linear equations can be solved efficiently. 

Combining the above, the semi-implicit scheme is given by,
\begin{equation}
  \begin{cases}
    \begin{aligned}
        \frac{\Q_{n+1}-\Q_n}{\delta t}=& \Delta_{\delta x} \Q_{n+1}-\lambda^2\left(\frac{A}{2C}\Q_{n+1}+\frac{1}{2}|\Q_{n}|^2\Q_{n+1}-\frac{B}{2C}\left({\Q_{n}}^2-\frac{|\Q_{n}|^2}{3}\I \right)\right) \\
        &+\omega \nabla_{\delta x}E_{bc}(\Q_{n},\Q_{n+1})/\delta x+(2\sum_{i=1}^k {\_v}_{n,i}{\_v}_{n,i}^\top)\nabla_{\delta x} E_\omega(\Q_n),\\
        \text{Renew } \_v_{n,i} & \text{ as } \_v_{n+1,i} \text{ with single-step LOBPCG} , \  i=1,2,\cdots,k,\\
    \end{aligned}
  \end{cases}
    \label{eq: semi-implicit}
\end{equation}
where the time discretization $\delta t$ is treated using the Barzilai-Borwein step size for the ill-conditioning. 

The gradient of the surface energies does not break the symmetry of linear equations in semi-implicit scheme \eqref{eq: semi-implicit}, and hence, we can use the Minimal Residual Method (MINRES) to solve it efficiently. The SD pushes the iteration point into the basin of attraction of the target critical point and we use the Newton method to
complete tail convergence with a higher convergence rate. When the target critical point has small absolute eigenvalues, we will use the Inexact-Newton method, since the ill-conditioned linear equation is hard to solve exactly, with Newton iterations \cite{shi2023hierarchies,han2022prism}. In practice, we use the following hybrid scheme of semi-implicit scheme and Inexact-Newton scheme,
\begin{equation}
\begin{cases}
    \text{The semi-implicit scheme \eqref{eq: semi-implicit}},\ \Vert \nabla_{\delta x} E_\omega(\Q_n)\Vert \geqslant 10^{-3},\\
    \Q_{n+1} =\Q_{n} +\delta \Q, \Vert \nabla_{\delta x}^2 E_\omega(\Q_n) \delta \Q +\nabla_{\delta x} E_\omega(\Q_n) \Vert \leqslant \eta_n\Vert \nabla_{\delta x} E_\omega(\Q_n)\Vert \ ,\text{otherwise},\\
\end{cases}
\label{eq: Hybrid}
\end{equation}
where $\eta_n<1$ is the tolerance for solving linear equations in the Inexact-Newton method.

\section{Results}
The top and bottom faces of our domain $V$ are 2D squares. For $h=1$, the lateral faces are squares and for $h \neq 1$, the lateral faces are 2D rectangles with short or long edges along the $z$-axis. In what follows, we refer to the leading eigenvector of $\Q$ or the eigenvector with the largest positive eigenvalue, as the \emph{nematic director}. Recall that we work with large values of $\omega$, i.e. in the strong anchoring regime which enforces planar degenerate anchoring conditions so that the nematic director is tangent to $\partial V$. This means that the director is planar on a given face, but free to rotate in the plane of the face. On a given edge, the director is necessarily either parallel or anti-parallel to the edge, leading to the discontinuities at the vertices of $V$.

We briefly review the critical points of rLdG energy on a 2D square domain with tangential Dirichlet boundary conditions, consistent with the $\omega \to \infty$ limit of \eqref{eq: energy total} \cite{robinson2017molecular}.  
In the $h\to 0$ limit, i.e. the thin film limit, the rLdG framework has two degrees of freedom $\frac{Q_1-Q_2}{2}$ and $Q_3$ in \eqref{eq:5-degree} assuming $\frac{Q_1+Q_2}{2}$, $Q_4$ and $Q_5$ to be constants.
For $\lambda$ small enough, the Well Order Reconstruction Solution (WORS) is the unique (stable) 2D critical point of the rLdG free energy. The WORS is special since $\frac{Q_1-Q_2}{2}=0$ and $Q_3 = 0$ along the square diagonals for the WORS critical point, which implies that there is a diagonal defect cross connecting the four square vertices. The NLC molecules are disordered in the square plane, along the diagonals, and the cross partitions the square domain into quadrants such that the nematic director is constant in each square quadrant.
As $\lambda$ increases, the stable WORS becomes an index-$1$ saddle point and bifurcates into two stable diagonal states (i.e. there are two rotationally equivalent states, labelled as D$_1$ and D$_2$). The D states are approximately uniaxial, with the director along one of the square diagonals. At the second bifurcation point $\lambda = \lambda^{**}$, the index-1 WORS bifurcates into an index-2 WORS and the boundary distortion or bent director solutions (BD). The BD solutions are characterised by two defect lines (with $\frac{Q_1-Q_2}{2}=0$ and $Q_3=0$) near two opposite edges. Each index-1 BD further bifurcates into an index-2 BD and two index-1 Rotated (R) solutions, where each R solution is approximately uniaxial such that the uniaxial director rotates by $\pi$-radians between a pair of opposite square edges. As $\lambda$ further increases, the index-1 R critical points gain stability, and we have $4$ rotationally equivalent R solutions, R$_\text{n}$, R$_\text{s}$, R$_\text{w}$, and R$_\text{e}$ (subscript indicates the direction of the director bending in the square interior, north, south, west and east), related to each other by $\pi/2$-rotations. 

On a 2D rectangle with Dirichlet tangential boundary conditions, the rLdG free energy has a unique critical point when the short edge length $\lambda_r$ is small enough and the aspect ratio, $b$, is large enough. This unique and stable critical point has two defect lines localised along the short edges, the nematic director is primarily oriented along the long edges in the rectangular interior and is labelled as a BD state, consistent with the nomenclature for a square domain. As $\lambda_r$ increases, there is a critical value of $b^*(\lambda_r)>1$ such that the BD state is stable for $b > b^*$, and $b^*$ is an increasing function of $\lambda_r$ \cite{shi2022nematic}. When $\lambda_r$ is large enough or the aspect ratio, $b$, is small enough, the stable states are the $D$ and $R$ states, akin to a square domain.
If the profiles on the opposite surfaces of $V$ are the same, which is observed in the majority of the numerical results, we label the 3D critical point of \eqref{eq: energy total} on $V$ as A-B-C, where A, B and C are approximately 2D rLdG critical points on square and rectangular domains as discussed above. The label means that the critical point exhibits A, B, and C-like 2D profiles on the top and bottom surfaces, front and back surfaces, left and right surfaces, respectively. If the profiles on the opposite faces of $V$ are not the same, we label the corresponding 3D critical point of \eqref{eq: energy total} as A$_1$,A$_2$-B$_1$,B$_2$-C$_1$,C$_2$, where A$_1$, A$_2$, B$_1$, B$_2$, C$_1$, C$_2$ are approximately 2D rLdG critical points as discussed above, and the sequence labels the rLdG profiles on the top, bottom surfaces, front and back surfaces, left and right surfaces, respectively.

\subsection{Small h}\label{sec:smallh}

\begin{figure}[H]
  \centering
  \includegraphics[width=.7\textwidth]{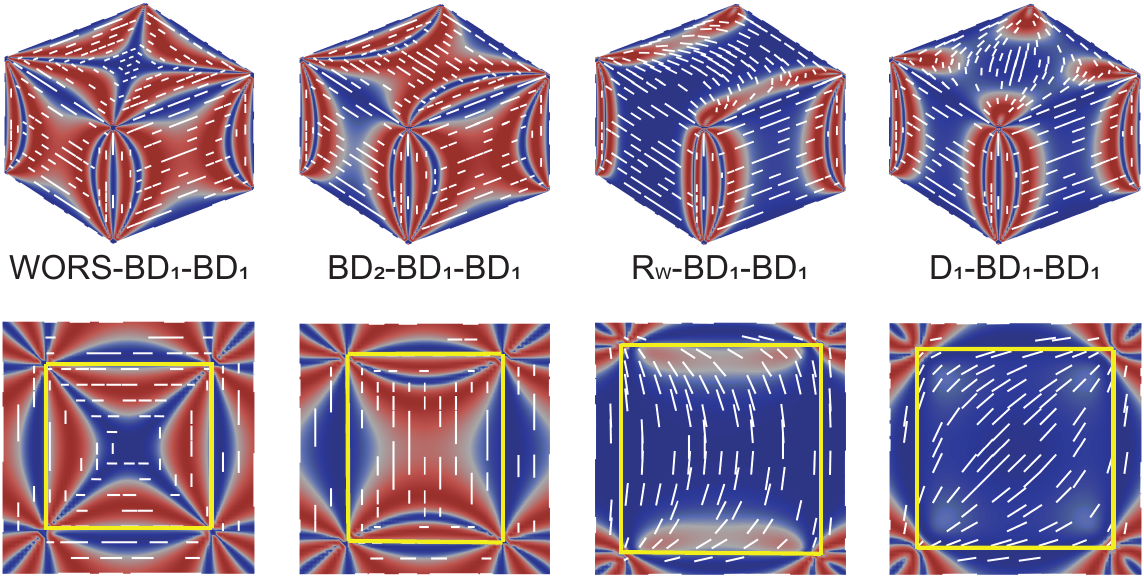}
  \caption{
  Top: Four typical BD$_1$-type solutions, WORS-BD$_1$-BD$_1$ for $\lambda^2 = 5$,
BD$_2$-BD$_1$-BD$_1$ for $\lambda^2 = 7$,
R$_\text{w}$-BD$_1$-BD$_1$ for $\lambda^2 = 32$,
D$_1$-BD$_1$-BD$_1$ for $\lambda^2 = 32$, and $h = 0.75$.
Bottom: The profiles on the middle cross-section of the cuboid. The yellow wireframe frames the 2D WORS-like, BD-like, R-like and D-like profiles in the center.
  }
  \label{figure 3}
\end{figure}
 

For $h<1$ small enough, we numerically observe BD$_1$-type profiles on the lateral faces of $V$, consistent with the fact that BD$_1$ is the rLdG energy minimiser on rectangular domains with tangent boundary conditions, when the short edge length is sufficiently small. The BD$_1$ profile has line defects (or bands of high biaxiality) localised near the short edges. As we vary $\lambda$, we recover WORS, BD, R, and D-type states on the top and bottom square faces.
Hence, for small enough $h$, we numerically compute the following family of critical points of \eqref{eq: energy total}:  WORS-BD$_1$-BD$_1$, BD$_2$-BD$_1$-BD$_1$, D$_1$-BD$_1$-BD$_1$, and R$_\text{w}$-BD$_1$-BD$_1$ on $V$ (Fig. \ref{figure 3}).
WORS-BD$_1$-BD$_1$ is stable for small $\lambda$ and small $h<1$,  D$_1$-BD$_1$-BD$_1$, and R$_\text{w}$-BD$_1$-BD$_1$ are stable for large $\lambda$ and small $h<1$ (Fig. \ref{figure 14}). 

The BD state is always an unstable critical point on the rLdG free energy on square domains and BD$_2$-BD$_1$-BD$_1$ state can be stable at $h<1$ and modest $\lambda$, raising interesting questions about the relationships between 2D rLdG critical points and 3D critical points of \eqref{eq: energy total}.
In the second row of Fig. \ref{figure 3}, on the 2D cross-section $z=0$, there are areas of high biaxiality near the four vertical edges (denoted by the four vertices of the 2D cross-section), and these high-biaxiality regions connect two line defects on the adjacent lateral surfaces. We draw yellow wireframes in each cross-section, the edges of which connect the high-biaxiality neighbourhoods of the vertices of the 2D cross-section. 
The domain inside the yellow wireframe is rarely affected by the high-biaxiality neighbourhoods of the four vertices. 
For $\lambda^2=7$, the yellow wireframe encloses a rectangular domain with relatively large aspect ratio and small short edge length, and a BD-type profile is observed within the yellow wireframe (as expected). 
This heuristic argument explains the stability (in the sense of positive second variation of \eqref{eq: energy total}) of the BD-BD$_1$-BD$_1$ critical point for certain values of $\lambda$ and $h$, as shown in the phase diagram Fig. \ref{figure 14}.

\begin{figure}
\includegraphics[width=.9\textwidth]{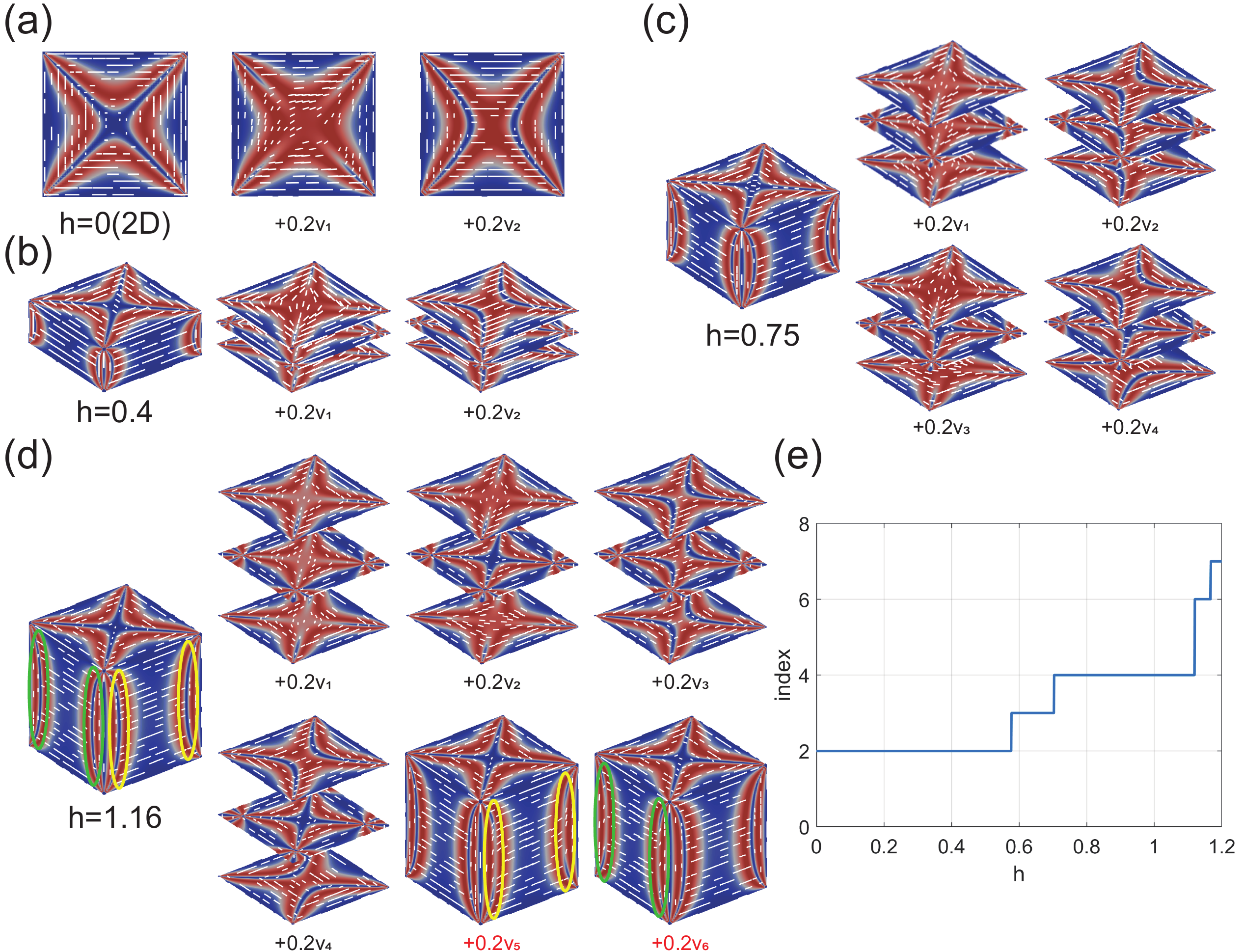}
  \caption{(a-d) The profiles of WORS-BD$_1$-BD$_1$ and the profiles with the perturbation along its unstable eigenvectors, i.e. WORS-BD$_1$-BD$_1 + 0.2\_v_i$, $i = 1,\cdots,6$ at $\lambda^2=16$ and different $h$. 
   In (d), we circle the profiles affected by $v_5$ and $v_6$ in yellow or green circles on the lateral surfaces.
  (e) The index of WORS-BD$_1$-BD$_1$ versus $h$ at $\lambda^2=16$.
  }
  \label{figure 5}
\end{figure}

As one can speculate from the rLdG study in \cite{han2020reduced, han2020solution}, the index of WORS-BD$_1$-BD$_1$ increases as $\lambda$ increases. Besides,  
we numerically observe that the index of WORS-BD$_1$-BD$_1$ increases as $h$ increases (Fig. \ref{figure 5}(e)).
  With $\lambda^2 = 16$, the 2D WORS is index-2 (Fig. \ref{figure 5}(a)).
 For $h=0.4$ (Fig. \ref{figure 5}(b)), in the framework of the full LdG model with the full five degrees of freedom, the  WORS-BD$_1$-BD$_1$ is also index-$2$ and the two unstable eigendirections are analogous to the two in-plane unstable eigendirections of the 2D index-2 WORS on a square domain. For $h=0.75$ (Fig. \ref{figure 5}(c)), the WORS-BD$_1$-BD$_1$ can accommodate $z$-variant unstable eigenvectors, v$_3$ and v$_4$, along which the WORS on the top surface relaxes to the D$_1$ state and BD$_1$ state respectively, and the bottom profile on $z=-1$ relaxes to the D$_2$ and BD$_2$ states respectively. For $h=1.16$, the BD$_1$-profile is energetically disadvantaged on the lateral faces in the $xz$ and $yz$-planes, for its two long line defects concentrated along the $z$-edges or $z$-axis, and the WORS-BD$_1$-BD$_1$ admits two further unstable eigenvectors, v$_5$ and v$_6$, which break the line defects, exploit the full five degrees of freedom and rotate the director out of the plane (Fig. \ref{figure 5}(d)). The property of increasing  Morse index with increasing $h$ need not hold for more exotic $z$-variant critical points of \eqref{eq: energy total} e.g. BD$_1$,BD$_2$-BD$_1$-BD$_1$ critical point (called BD-WORS-BD in \cite{shi2023hierarchies}), and in what follows, we use the WORS-BD$_1$-BD$_1$ critical point as the parent state for computing saddle points and stable critical points of \eqref{eq: energy total} for $h<1$.  
\subsection{Large h}
As reported in \cite{shi2022nematic}, when $h$ is large enough, the BD$_2$ state with line defects localised near the short edges, is energetically favorable on the lateral surfaces of $V$, for small enough $\lambda^2$. For $h>1$, the short edges are along the $x$ and $y$-edges on the lateral faces on $V$. We find BD$_2$-type profiles on the lateral surfaces of the numerically computed critical points of \eqref{eq: energy total} for $h>1$ e.g., the 3D critical points WORS-BD$_2$-BD$_2$, and D$_1$-BD$_2$-BD$_2$ in Fig. \ref{figure 7}.

Analogous to the results in Section \ref{sec:smallh} for $h<1$, we also find the following 3D critical points of \eqref{eq: energy total}: R-BD$_2$-BD$_2$ and BD-BD$_2$-BD$_2$. However, the BD-BD$_2$-BD$_2$ is always an unstable critical point. We conjecture that the instability of the BD-BD$_2$-BD$_2$  critical point arises from the fact that it is uniaxial near the center of the cuboid and exhibits a 2D BD-type profile near the top and bottom surfaces of $V$ which are squares, and a 2D BD state is always unstable on 2D square domains. The solution R-BD$_2$-BD$_2$ only exists for large enough $h$ and $\lambda$. 
The R-BD$_2$-BD$_2$ might be stable for very large values of $h$, but it is unstable in our calculated parameter domain. For rectangles with long short edges, the stable rLdG critical points are D and R states, and hence, as $\lambda$ increases for a fixed $h$, we numerically observe stable 3D critical points with D and R-type profiles on all six faces of $V$, as will be discussed in Section \ref{sec:h=1} and \ref{sec:uni}.
\begin{figure}
  \centering
\includegraphics[width=.4\textwidth]{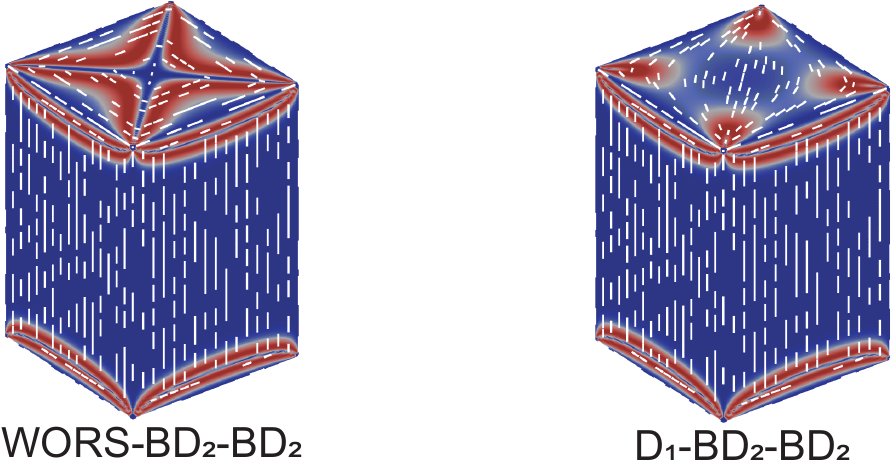}
  \caption{Two unstable BD$_2$-type solutions: WORS-BD$_2$-BD$_2$ and D$_1$-BD$_2$-BD$_2$ at $h=1.5, \lambda^2=70$. They are stable when $\lambda$ is small enough and $h$ is large enough.}
  \label{figure 7}
\end{figure}
\subsection{h = 1}\label{sec:h=1}


When $h = 1$, the domain $V$ is a cuboid with edges of equal lengths. For small $\lambda$, the unique stable critical point of \eqref{eq: energy total} is WORS-WORS-WORS.
The WORS-WORS-WORS is always a critical point and loses stability as $\lambda$ increases. For $\lambda^2 = 52$, the Morse index of WORS-WORS-WORS is greater than $12$. 
It has great symmetry and interesting 3D defect structures.
The line defects are on the two face diagonals of each face of the cuboid and the four body diagonals, which are surrounded by regions of high biaxiality in red in Fig. \ref{figure 10}. The area with high biaxiality decreases as $\lambda$ increases, since biaxiality is heavily penalised as $\lambda \to \infty$ \cite{majumdar2010landau}.
We construct the solution landscape with the WORS-WORS-WORS as the parent state  (Fig. \ref{figure 10}). Along the unstable directions, the cross structures of the WORS on the four lateral surfaces split into line defects and relax to WORS-BD$_1$(BD$_2$)-BD$_1$(BD$_2$).
The critical point, WORS-BD$_1$-BD$_1$, is connected to the critical points: WORS-BD$_1$-BD$_1$, BD$_2$-BD$_1$-BD$_1$, D$_2$-BD$_1$-BD$_1$, R$_\text{w}$-BD$_1$-BD$_1$, analogous to the solution landscapes with $h<1$.
Since $h=1$, BD$_1$ and BD$_2$ are energetically degenerate on the lateral faces, and we find corresponding BD$_2$-type solutions BD$_2$-BD$_2$-BD$_2$, D$_2$-BD$_2$-BD$_2$, R$_\text{w}$-BD$_2$-BD$_2$, connected with the critical point, WORS-BD$_2$-BD$_2$. Starting from the critical points: R$_\text{w}$-BD$_{1(2)}$-BD$_{1(2)}$ and D$_2$-BD$_{1(2)}$-BD$_{1(2)}$, and following either gradient flow or saddle dynamics, we find almost uniaxial solutions with 2D D-like or R-like profiles on all the six faces of the cube. 

\begin{figure}[H]
  \centering
  \includegraphics[width=.95\textwidth]{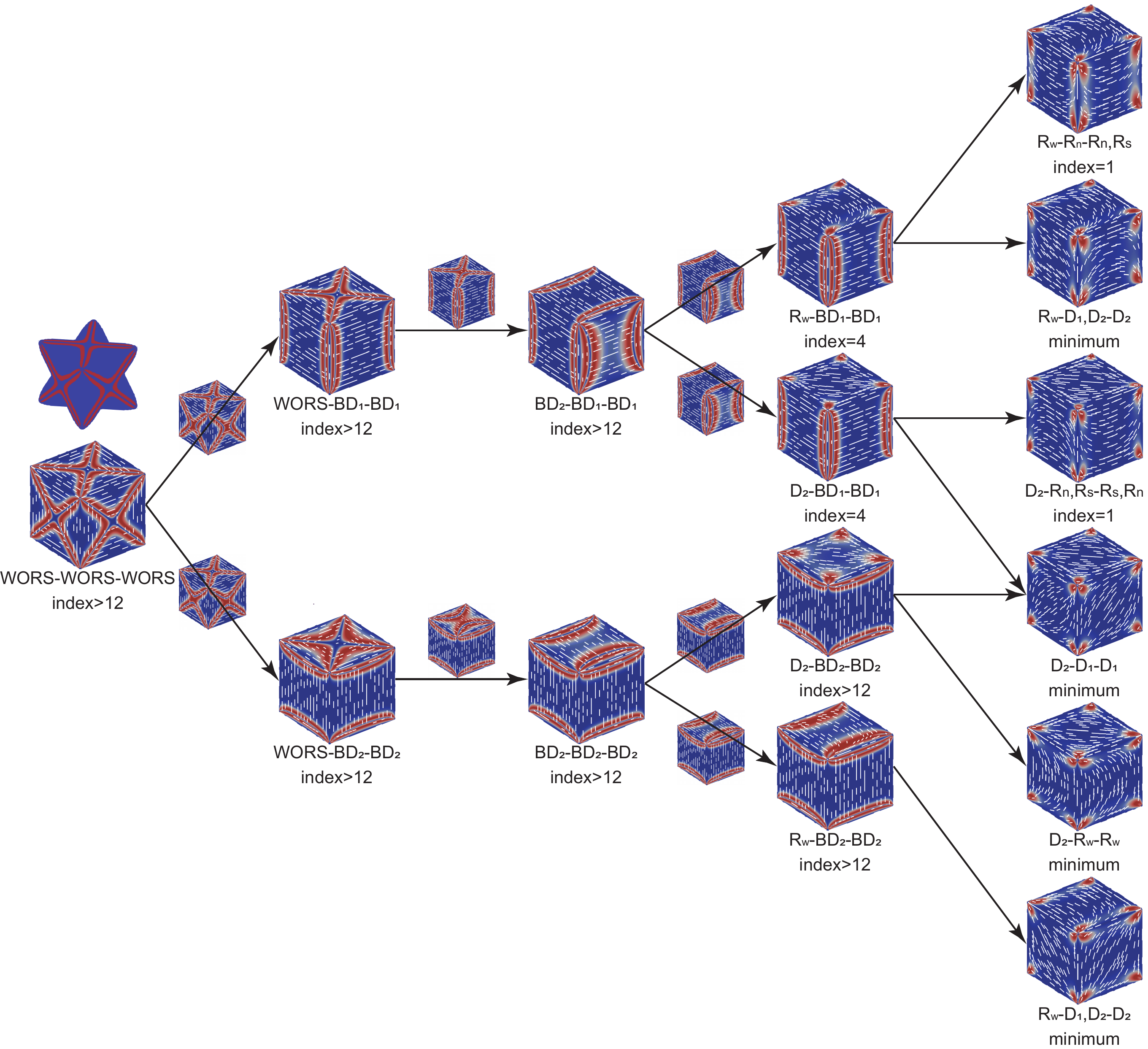}
  \caption{The solution landscape at $h=1$ and $\lambda^2=70$. The arrow represents that following the saddle dynamic in Eq. \eqref{eq: SD}, the higher-index critical state at the end of the arrow with a small perturbation along an unstable direction 
converges to the lower-index critical state at the arrowhead. 
The defect profile of WORS-WORS-WORS is shown by the red high-biaxiality regions $(\beta^2\geqslant 0.8)$ and the surrounding blue low-biaxiality regions.
}
  \label{figure 10}
\end{figure}

The next questions pertain to the construction of initial conditions that mimic these stable, almost uniaxial critical points of \eqref{eq: energy total}, using topological arguments, and estimating the multiplicity of the almost uniaxial stable critical points of \eqref{eq: energy total}. 
A uniaxial $\Q$-tensor is described by $\Q = s_+ ( \mathbf{n}\otimes \mathbf{n} - \frac{\mathbf{I}}{3})$, where $\mathbf{n} \in \mathbb{S}^2$ is the eigenvector of $\Q$ with the non-degenerate eigenvalue, referred to as \emph{nematic director}. In the $\omega \to \infty$ limit, we have planar degenerate conditions on all six faces of the cuboid which require $\mathbf{n}$ to be tangent to all faces of the cuboid. 
In \cite{robbins2004classification}, the authors provide a complete topological classification of tangent nematic directors on cuboids or three-dimensional geometries in terms of a complete set of topological invariants: the edge signs, kink numbers and trapped areas. Let $E_j$ be an edge of the cube oriented in the $j$-direction, where $j$ could be $x, y, z$, i.e. one of the coordinate directions. The edge sign $e_{\mathbf{j}}^{E_j}= \pm 1$, determines the sign of $\mathbf{n}$ on the edge $E_j$, relative to the coordinate unit vector $\mathbf{j}$. The integer-valued kink number, $k_j^{\mathbf{v}}$, is a measure of the rotation of $\mathbf{n}$ along a path, that connects two edges meeting at the vertex $\mathbf{v}$, on the face normal to $\mathbf{j}$. The minimum possible winding e.g. a $\frac{\pi}{2}$ rotation between a pair of adjacent edges, is assigned zero kink number. 
The kink numbers satisfy a sum rule on each face, stemming from regularity assumptions on $\mathbf{n}$ \cite{robbins2004classification}. The D and R critical points of the rLdG free energy have zero kink numbers or minimum allowed rotation between pairs of adjacent square edges.
Let $S_v$ be a surface that isolates the vertex $\mathbf{v}$ of the cube, from the remaining vertices. Then, the trapped area, denoted by $\Omega^\mathbf{v}$, is the oriented area of the image, $\mathbf{n}(S_v)$, on the unit sphere $\mathbb{S}^2$. 
For a cuboid, the trapped areas are necessarily odd multiples of $\pi/2$. For the simplest topologies, the trapped area can only be $-\pi/2$ or $+\pi/2$. In what follows, we identify the different families of $\mathbf{n}$ that satisfy the tangent boundary conditions on the cuboid faces, with the simplest topologies, i.e. zero kink numbers and minimal trapped areas of $\pm \frac{\pi}{2}$. Once we identify the relevant $\mathbf{n}$ with the simplest topologies, the uniaxial initial conditions, $\Q_u = s_+(\mathbf{n}\otimes \mathbf{n} - \frac{\mathbf{I}}{3})$, can be constructed for the numerical solver.

Since we are interested in tangent nematic directors with the simplest topology and with zero kink numbers on the cuboid faces, we restrict ourselves to $\mathbf{n}$ which have a D or R-type profile on each cuboid face. The profile near each vertex has four choices in Fig. \ref{figure 11}(a) (without the profiles related by rotation). Two of them are called 3D splay vertices on the left, and the other two are called 3D non-splay vertices on the right of Fig. \ref{figure 11}(a). We name the splay vertex with directors, $\mathbf{n}$, pointing from (towards) the vertex as ``source" (``sink''), marked by a red (black) circle respectively.

\begin{figure}
\centering
\includegraphics[width=.99\textwidth]{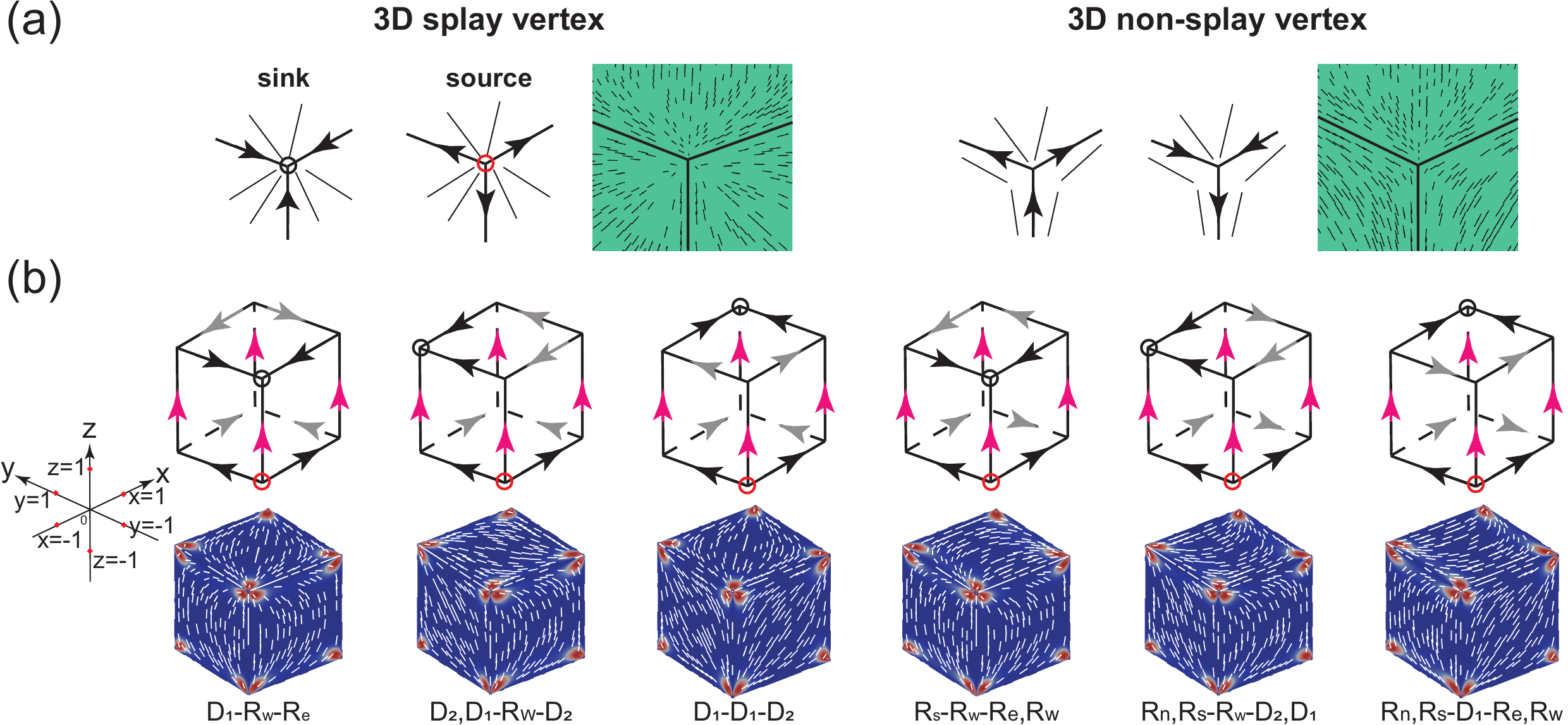}
  \caption{(a) The classification of vertices: splay and non-splay vertices. The black and red circle represents the "sink" and "source" vertex respectively.
  (b) The profiles of uniaxial stable states at $h=1$ and $\lambda^2=100$. The arrows on the edges of the cube frame represent the director $\mathbf{n}$.  We assume the directions on the four pink edges are $\vec{e}_z$, for ruling out some candidates with high energy.}
  \label{figure 11}
\end{figure}

Next, we enumerate the possibilities for tangent $\mathbf{n}$ with a D or R-type profile on each cuboid face, such that the vertices are either splay vertices or non-splay vertices. To reduce the number of candidates, we assume that $\mathbf{n} = \hat{\mathbf{z}}$, where $\hat{\mathbf{z}}$ is the unit vector in the $z$-direction, on the four vertical edges. 
This assumption is reasonable in the sense that it is consistent with minimal distortion across the cuboid height, and we are interested in minimal energy configurations. 
With these assumptions, there is one and only one ``sink" vertex on the top surface, and one and only one ``source" vertex on the bottom surface. Otherwise, the 2D profile on the top or bottom cuboid faces is neither the D nor R states.
If we fix the ``source" vertex on the bottom surface at $(-1,-1,-1)$, then there are three choices for the location of the ``sink" vertex on the top surface depending on the relative location between ``source" and ``sink": $(-1,-1,1)$, $(-1,1,1)$, $(1,1,1)$.
Once we fix the location of the sink vertex on the top vertex, the edge orientations are fixed on all the vertical edges and the edges intersecting at the source and sink vertex. 
There is freedom for the two gray edges on the bottom face, far from the ``source" vertex, but the edge orientations on the gray edges need to be chosen to ensure that $\mathbf{n}$ is consistent with either a D or R profile on the bottom surface. Namely, there are three choices for the edge orientations on the two gray edges and once we choose one orientation for $\mathbf{n}$, the other orientation follows from the requirement of having a D or R-type profile on the bottom surface, e.g. $(\mathbf{n}\vert_{(x,1,-1)},\mathbf{n}\vert_{(1,y,-1)})$: ($\hat{\mathbf{x}}$, $\hat{\mathbf{y}}$), ($\hat{\mathbf{x}}$,-$\hat{\mathbf{y}}$),($-\hat{\mathbf{x}}$,$\hat{\mathbf{y}}$). Similarly, there are three choices for the edge orientations of the two edges that do not meet at the ``sink" vertex on the top surface. 
In conclusion, there are $3\times 3\times 3 = 27$ candidates for the stable uniaxial critical points of \eqref{eq: energy total}, constructed from tangent nematic directors with the simplest topology.

Next, we outline the construction of the associated initial conditions for the $\Q$-solver. Let $\mathbf{Q}_{2D}$ be a minimiser of the LdG energy on a 2D square domain with tangent boundary conditions, for relatively large $\lambda$ (see \cite{han2020reduced}). There are six choices of $\mathbf{Q}_{2D}$ - the two D and four R stable critical points of the rLdG energy, of the form
\[
\mathbf{Q}_{2D} = s_+ (\mathbf{n}\otimes \mathbf{n} - \frac{\mathbf{I}}{3})
\]
where $\mathbf{n}$ is the 2D director that describes either the D or R solutions. We define the following Dirichlet boundary condition on a 3D cuboid by 
\begin{align}\label{eq:bc}
Q_{bc} = TQ_{2D}T^T,
T = 
\begin{cases}
\mathbf{I},  & \text{$\partial V_{tb}$} \\
\begin{pmatrix}
1 & 0 & 0\\
0 & 0 & 1\\
0 & 1 & 0
\end{pmatrix}, & \text{$\partial V_{fb}$}\\
\begin{pmatrix}
0 & 0 & 1\\
0 & 1 & 0\\
1 & 0 & 0
\end{pmatrix}, & \text{$\partial V_{lr}$}
\end{cases}
\end{align} 
where $T$ is the rotation matrix.
We can minimise the LdG energy, \eqref{eq:LdG}, on a 3D cuboid, with these fixed Dirichlet conditions and use the minimisers as initial conditions, to search for stable and almost uniaxial critical points of \eqref{eq: energy total}, for sufficiently large $\lambda$ and $h$. 

By using the aforementioned $27$ initial conditions, 
we only find $6$ stable almost uniaxial critical points of \eqref{eq: energy total}. 
When the ``sink" and ``source" vertex are on the same edge, the corresponding critical point of \eqref{eq: energy total} is D$_1$-R$_\text{w}$-R$_\text{e}$ and R$_\text{s}$-R$_\text{w}$-R$_\text{e}$,R$_\text{w}$;
when the ``sink" and ``source" vertices are on the face diagonal, the corresponding numerically computed critical point of \eqref{eq: energy total} is D$_2$,D$_1$-R$_\text{w}$-D$_2$ and R$_\text{n}$,R$_\text{s}$-R$_\text{w}$-D$_2$,D$_1$; and when the sink and source vertices are on the body diagonal of the cuboid, the corresponding critical point of \eqref{eq: energy total} is D$_1$-D$_1$-D$_2$ and R$_\text{n}$,R$_\text{s}$-D$_1$-R$_\text{e}$,R$_\text{w}$.
The critical points exhibiting R-type profiles on cuboid faces have higher energy than the critical points which exhibit D-type profiles on the cuboid faces, consistent with the fact that D-states have lower energy than R-states on 2D square domains with tangent boundary conditions \cite{lewis2014colloidal}. 
Thus, the states in the first row of Fig. \ref{figure 11}(b) have lower energy than the states in the second row, and D$_1$-D$_1$-D$_2$ has the lowest energy. We revisit these almost uniaxial critical points of \eqref{eq: energy total} and their stability in Sec. \ref{sec:phase_diagram}.

\begin{figure}[H]
  \centering
  \includegraphics[width=.99\textwidth]{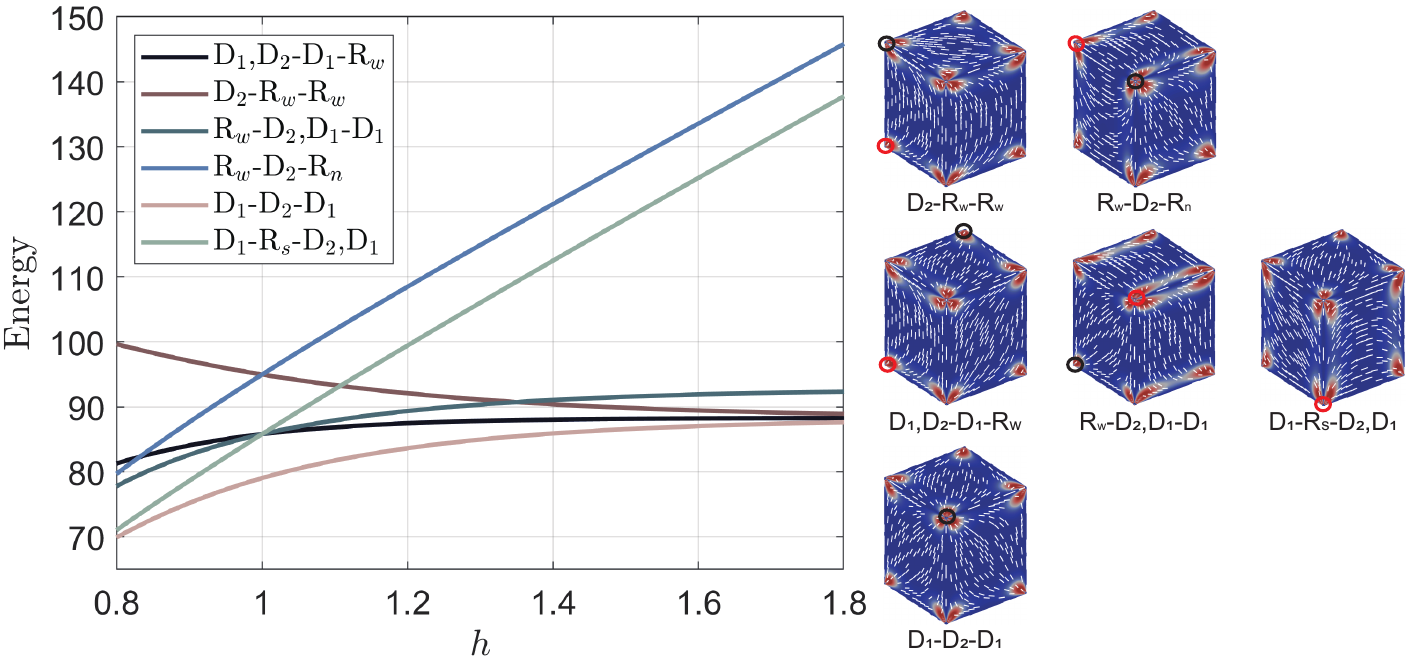}
  \caption{Left: The energy plot of the states on the right versus $h$ at $\lambda^2=60$. Right: Six uniaxial stable  states with $h = 1.2$, which are rotationally equivalent to the states in the first row of Fig. \ref{figure 11} with $h = 1$.}
  \label{figure 12}
\end{figure}
\subsection{The effect of the height on uniaxial states and transition pathways}\label{sec:uni}

When $h\neq 1$, the height of the cuboid and the edge length of the top or bottom surface square are not equal, and we have three more uniaxial states in Fig. \ref{figure 12} which are derived from the three states in the first row of Fig. \ref{figure 11}. With multiple stable almost uniaxial critical points of \eqref{eq: energy total}, we study the transition pathways between some of them in Fig. \ref{figure 12}. 
 In the following, we mark the ``sink" and ``source" vertices with yellow circles in Fig. \ref{figure 13}, and refer to both vertices as ``splay vertices". 

\begin{figure}
\includegraphics[width=.99\textwidth]{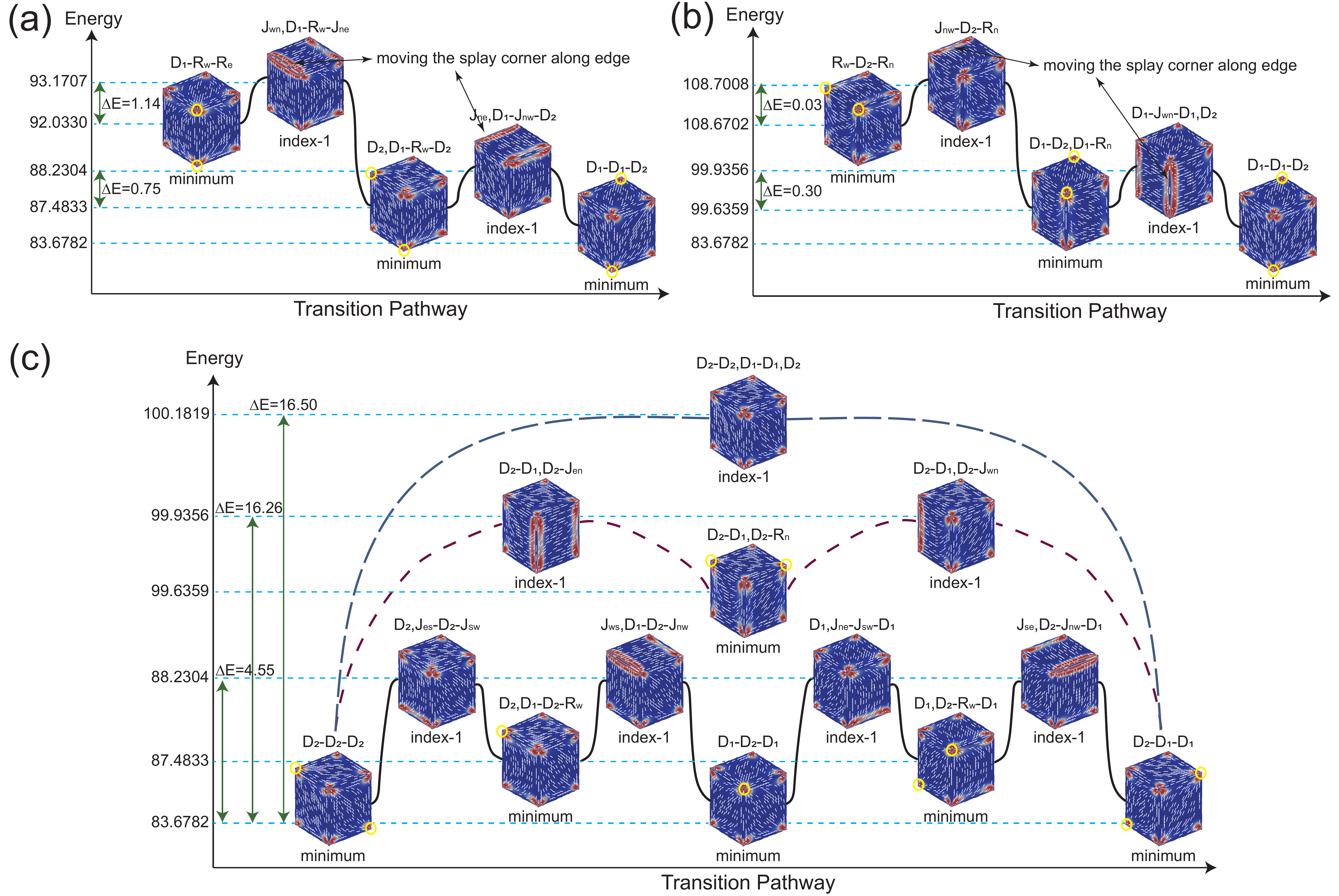}
  \caption{The transition pathways (a) between stable states D$_1$-R$_\text{w}$-R$_\text{e}$ and  D$_1$-D$_1$-D$_2$, and (b) between R$_\text{w}$-D$_2$-R$_\text{n}$ and D$_1$-D$_1$-D$_2$ at $\lambda^2=60,h=1.2$. (c) Three transition pathways between dual stable states D$_2$-D$_2$-D$_2$ and D$_2$-D$_1$-D$_1$ at $\lambda^2=60,h=1.2$.
  The yellow circles represent 3D splay vertices.
  }
  \label{figure 13}
\end{figure}
 
 In Fig. \ref{figure 13}(a), we show the transition pathway between locally stable state D$_1$-R$_\text{w}$-R$_\text{e}$ and globally stable state D$_1$-D$_1$-D$_2$.
The location of the splay vertex on the bottom surface does not change along the transition pathway, and the splay vertex on the top surface moves along the short edges from $(-1,-1,h)$ via $(-1,1,h)$, i.e. hits a locally stable critical point D$_2$,D$_1$-R$_\text{w}$-D$_2$, and then the splay vertex on the top surface moves to $(1,1,h)$, to settle into the stable critical point D$_1$-D$_1$-D$_2$.

The transition pathway between R$_\text{w}$-D$_2$-R$_\text{n}$ and D$_1$-D$_1$-D$_2$ in Fig. \ref{figure 13}(b) is composed of two segments: a transition pathway between R$_\text{w}$-D$_2$-R$_\text{n}$ and D$_1$-D$_2$,D$_1$-R$_\text{n}$, and a transition pathway between  D$_1$-D$_2$,D$_1$-R$_\text{n}$ and D$_1$-D$_1$-D$_2$. On the first segment, the splay vertex on the top surface moves along the short edge from $(-1,1,h)$ to $(1,1,h)$, and the other splay vertex doesn't move. On the second segment, the splay vertex on the top surface moves along the long edge from $(-1,-1,h)$ to $(-1,-1,-h)$. These numerical results suggest that the splay vertex moves one step (either along a short or long edge), on every step of the transition pathway.

In Fig. \ref{figure 13}(c), we investigate the switching mechanism between two different D-D-D states, D$_2$-D$_2$-D$_2$ and D$_2$-D$_1$-D$_1$. Analogous to the one-step-at-a-time pattern in Fig. \ref{figure 13}(a) and (b), there is a transition pathway plotted in black line with splay vertices moving along short edges, via three energy minima and four index-1 transition states, and a transition pathway in red with splay vertices moving long edges, via one energy minimum and two index-1 transition states. The switching could be impeded by getting trapped in an intermediate energy minimum. There is another pathway in blue, for which the two splay vertices move along the diagonals on the top and bottom simultaneously, via no intermediate energy minimum and only one transition state, but this pathway has a higher energy barrier. The multiple choices for transition pathways between two stable states (or critical points of the rLdG energy) have been reported in \cite{han2020solution} on a 2D hexagon with tangent boundary conditions. In \cite{han2020solution}, the authors also find a direct pathway between two stable critical points of the rLdG energy, connected via an index-$8$ saddle point.  In the 3D pathway, see the blue line in Fig. \ref{figure 13}, the transition state D-D$_2$,D$_1$-D$_1$,D$_2$ is an index-$1$ saddle point of \eqref{eq: energy total}, and hence, could be of relevance for practical processes.
 



\subsection{Phase diagram}
\label{sec:phase_diagram}

To summarise our numerical results, we compute a phase diagram in Fig. \ref{figure 14} as a function of $\lambda^2$ and $h$, where we demarcate stable and metastable states. 
In what follows, we label a critical point as being \emph{metastable} if it is an index-$0$ critical point and a critical point as being \emph{stable} if it is the minimum energy index-$0$ critical point amongst the catalogue of numerically computed metastable critical points.

For $h<1$, the unique stable state is WORS-BD$_1$-BD$_1$ for $\lambda$ small enough. As $\lambda$ increases, D$_1$-BD$_1$-BD$_1$ is the stable state. For $\lambda$ sufficiently large, R$_\text{w}$-BD$_1$-BD$_1$ is a metastable state with higher energy than D$_1$-BD$_1$-BD$_1$. There is a small area, for $h$ close to unity and modest $\lambda$, for which when D$_1$-BD$_1$-BD$_1$ loses global stability, i.e. it is metastable, or loses stability and BD$_2$-BD$_1$-BD$_1$ is the stable state, for which we offer heuristic insights in Section \ref{sec:smallh}.

When $h = 1$, the stable state is WORS-WORS-WORS, for $\lambda$ small enough. For $\lambda$ large enough, we find multiple uniaxial (meta)stable states like D$_1$-D$_2$-D$_1$, all of which have D and R-type profiles on the six cube faces. The almost uniaxial (meta)stable critical points exist for large $\lambda$, and are likely to be observable in experiments and applications based on large cuboids, with weak tangent boundary conditions. 

For $h>1$, BD$_2$ is the energetically preferred 2D critical point of the rLdG energy on the lateral faces, for small $\lambda$. For small $\lambda$, the stable state is WORS-BD$_2$-BD$_2$. For large values of $\lambda$ and $h>1$ (within our numerically computed range), the stable state is one of the uniaxial critical points with D-type profiles on the cuboid faces. This could change in the $h \to \infty$ limit. Additionally, there is a new stable state  BD$_1$-D$_1$-BD$_2$, for modest $\lambda^2$, and a metastable state D$_1$-BD$_2$-BD$_2$, for certain values of $\lambda$ and $h$.

\begin{figure}[H]
  \centering
  \includegraphics[width=.99\textwidth]{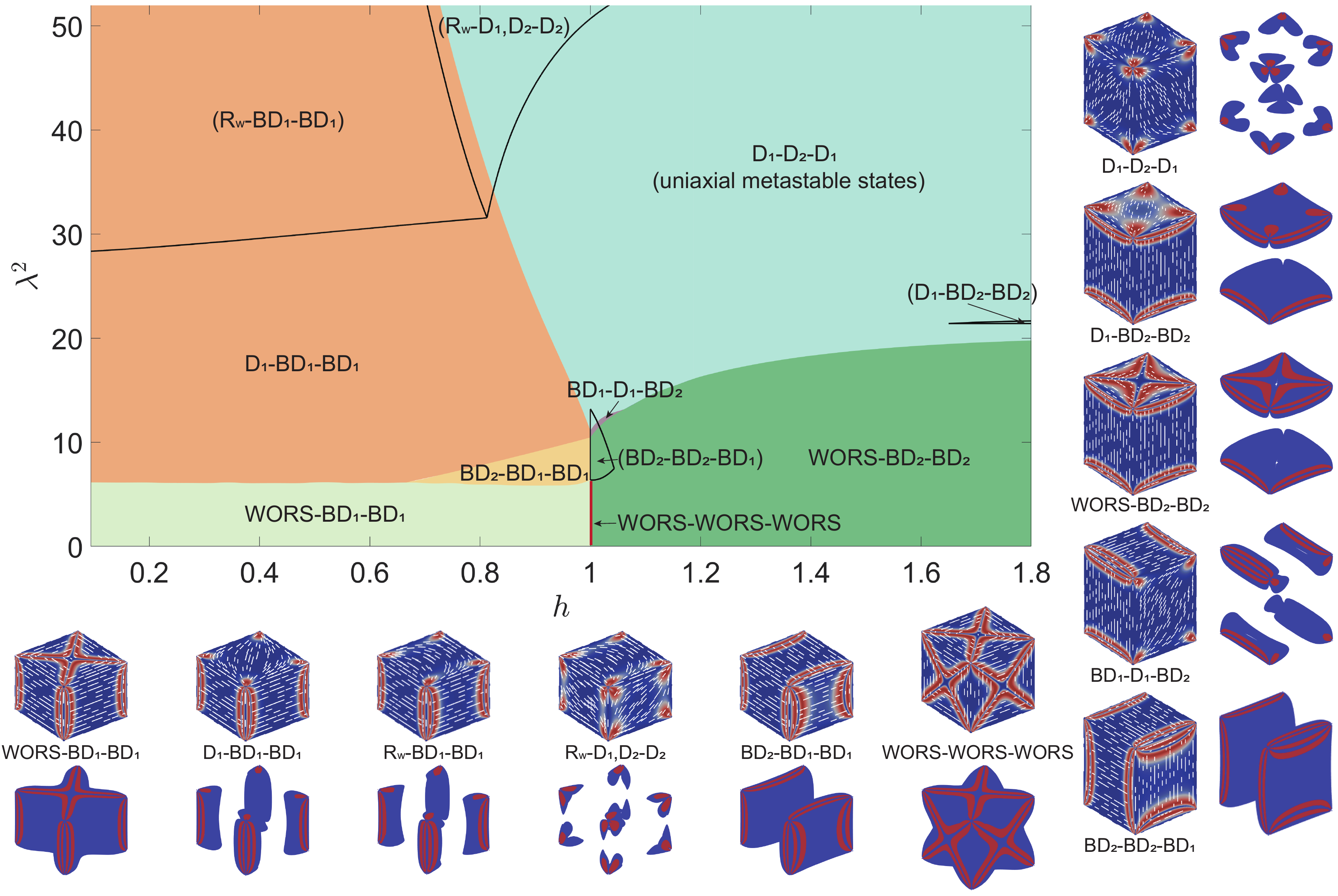}
  \caption{The phase diagram as a function of $\lambda^2$ and $h$. The parameter domains of different stable states are distinguished by different colours. The different metastable states are framed by black lines, with names in brackets. We present the numerically computed profiles and defect structures of the critical points of \eqref{eq: energy total} at $\lambda^2=50$, $h=0.78$ below the phase diagram; the same for $\lambda^2=50$, $h=1.25$ on the right-hand side of the phase diagram, and those for WORS-WORS-WORS at $\lambda^2 = 50$, $h = 1$.}
  \label{figure 14}
\end{figure}

\section{Conclusion and discussion}
\label{sec:conclusions}
We study NLC configurations inside a 3D cuboid, within the full LdG framework for which the LdG order parameter has five degrees of freedom, with planar degenerate/tangential boundary conditions. The tangent boundary conditions are enforced by means of surface energies, by means of a large surface anchoring coefficient fixed to be $W=0.01 \text{ Jm}^{-2}$ throughout this paper. There are two geometrical parameters in our study - the re-scaled edge length of the square cross-section, $\lambda$, and $h$ - the ratio of the cuboid cross-section edge length and height. 
We prove a batch of analytic results for a smoothed cuboid - the existence of a global minimiser of \eqref{eq: energy total} for $\omega>0$, uniqueness of the critical points of \eqref{eq: energy total} for $\lambda$ small enough, and importantly, that energy minimisers satisfy tangent boundary conditions in the $\omega \to \infty$ limit. We work with a fixed large value of $\omega$ throughout the manuscript, that ensures that the nematic director is tangent to the cuboid faces. 
This, in turn, leads to numerical difficulties, and we design a new numerical scheme to deal with the stiffness of the problem and accelerate the convergence rate of the saddle dynamics.

In Fig. \ref{figure 14}, we plot a phase diagram of the (meta)stable critical points of \eqref{eq: energy total} in the $h - \lambda^2$ plane. Fixing $h$, one can identify the profiles on the top and bottom cuboid faces with the solution landscape on a square domain, as a function of $\lambda^2$. Indeed, for small $\lambda$, we get WORS-type profiles on the top and bottom, and for large $\lambda$, we primarily get D and R-type profiles on the top and bottom cuboid surfaces. Similarly, if we fix $\lambda$ and traverse the phase diagram in the $h$-direction, the profiles on the lateral surfaces follow the predictions for solution landscapes on rectangles in \cite{shi2022nematic}. For small $\lambda$, the profiles on the lateral faces are either BD$_1$ or BD$_2$, depending on $h$ ($h$ determines whether the $z$-edge is the shorter edge or not, and the BD line defects are localised along the shorter edges). If $h=1$, then we observe the WORS on the lateral faces. For $\lambda$ sufficiently large, the profiles on the lateral faces of the (meta)stable critical points of \eqref{eq: energy total} depend on $h$; we get BD-profiles on the lateral faces for $h$ small, and the D and R-profiles for $h$ sufficiently large. Of course, if $h$ is sufficiently large or sufficiently small, the D and R-profiles on the lateral faces closely resemble BD-type profiles with an approximately constant director along the longer edge of the lateral surface. The phase diagram in Figure~\ref{figure 14} illustrates how solution landscapes on squares and rectangles control the profiles on the cuboid faces, for (meta)stable critical points of \eqref{eq: energy total} with tangent boundary conditions. This, in turn, determines the interior 3D structure, including defect structures, of physically relevant NLC configurations within 3D cuboids.

There are numerous interesting future research directions. We could work with weaker anchoring i.e. $\omega < 1$, which would offer greater freedom on the cuboid faces. In particular, we do not expect close correspondence between 2D solution landscapes and (meta)stable critical points of \eqref{eq: energy total}, for smaller values of $\omega$. A further generalization concerns arbitrary 3D geometries with polygonal faces and tangent boundary conditions. The analysis in \cite{han2020reduced} for arbitrary 2D regular polygons, in the rLdG setting, can be applied to (meta)stable critical points of \eqref{eq: energy total} on arbitrary 3D geometries with polygonal faces, particularly when the edge lengths are small or very large. A further thought concerns the applications of machine learning to train data for generating solution landscapes of complex systems. We have a series of papers on NLC solution landscapes in 2D and 3D \cite{shi2023hierarchies,han2022prism}, and our results on stable states, unstable saddle points and pathways between critical points can be used as precious training data for new machine learning-based algorithms.

\section{Acknowledgements}
B. Shi thanks the University of Strathclyde for their hospitality. The authors would also like to thank the Isaac Newton Institute for Mathematical Sciences for their hospitality during the programme ``Uncertainty Quantification and Stochastic Modelling of Materials" when work on this paper was undertaken.

\bibliographystyle{unsrt}
\bibliography{references}

\end{document}